
\documentclass[11pt]{article}
\usepackage[in]{fullpage}%


\usepackage{graphicx}%
\usepackage{amsmath}%
\usepackage{color}%
\usepackage{xspace}%
\usepackage{paralist}
\usepackage{mleftright}%
\usepackage{amssymb}%
\usepackage{xcolor}%
\usepackage{caption}%
\usepackage{stmaryrd}%

 \usepackage[amsmath,thmmarks]{ntheorem}%
 \theoremseparator{.}%

\usepackage{hyperref}%
\hypersetup{%
   breaklinks,%
   colorlinks=true,%
   linkcolor=[rgb]{0.45,0.0,0.0},%
   citecolor=[rgb]{0,0,0.45}
}

\numberwithin{figure}{section}%
\numberwithin{table}{section}%
\numberwithin{equation}{section}%

\renewcommand{\Re}{\mathbb{R}}%


   \theoremstyle{plain}%
  \newtheorem{theorem}{Theorem}[section]
  
  \newtheorem{lemma}[theorem]{Lemma}%
  \newtheorem{claim}[theorem]{Claim}%

   \newtheorem{fact}[theorem]{Fact}%

  \theoremstyle{plain}%
  \theoremheaderfont{\sf}%
  \theorembodyfont{\upshape}%
  \newtheorem{definition}[theorem]{Definition}
  
  \newtheorem*{remark:unnumbered}[theorem]{Remark}%
  \newtheorem{remark}[theorem]{Remark}%

\newcommand{\HLinkShort}[2]{\hyperref[#2]{#1\ref*{#2}}}
\newcommand{\HLink}[2]{\hyperref[#2]{#1~\ref*{#2}}}
\newcommand{\HLinkPage}[2]{\hyperref[#2]{#1~\ref*{#2}%
      $_\text{p\pageref{#2}}$}}
\newcommand{\HLinkPageOnly}[1]{\hyperref[#1]{Page~\refpage*{#1}%
      $_\text{p\pageref{#1}}$}}

\newcommand{\HLinkSuffix}[3]{\hyperref[#2]{#1\ref*{#2}{#3}}}
\newcommand{\HLinkPageSuffix}[3]{\hyperref[#2]{#1\ref*{#2}%
      #3$_\text{p\pageref{#2}}$}}

\newcommand{\deflab}[1]{\label{def:#1}}
\newcommand{\defref}[1]{\HLink{Definition}{def:#1}}%

\newcommand{\claimlab}[1]{\label{claim:#1}}
\newcommand{\claimref}[1]{\HLink{Claim}{claim:#1}}%
      
\newcommand{\figlab}[1]{\label{fig:#1}}
\newcommand{\figref}[1]{\HLink{Figure}{fig:#1}}

\newcommand{\remlab}[1]{\label{rem:#1}}

\providecommand{\lemlab}[1]{\label{xlemma:#1}}
\renewcommand{\lemlab}[1]{\label{xlemma:#1}}
\newcommand{\lemref}[1]{\HLink{Lemma}{xlemma:#1}}%

\newcommand{\factlab}[1]{\label{fact:#1}}
\newcommand{\factref}[1]{\HLink{Fact}{fact:#1}}%

\newcommand{\thmlab}[1]{{\label{theo:#1}}}
\newcommand{\thmref}[1]{\HLink{Theorem}{theo:#1}}

\newcommand{\seclab}[1]{\label{sec:#1}}

\newcommand{\apndlab}[1]{\label{apnd:#1}}
\newcommand{\apndref}[1]{\HLink{Appendix}{apnd:#1}}

\providecommand{\eqlab}[1]{}%
\renewcommand{\eqlab}[1]{\label{equation:#1}}

\newcommand{\myqedsymbol}{\rule{2mm}{2mm}}

\newcommand{\myparagraph}[1]{\bigskip\noindent{\textbf{#1}}}

\theoremheaderfont{\em}%
\theorembodyfont{\upshape}%
\theoremstyle{nonumberplain} \theoremseparator{}
\theoremsymbol{\myqedsymbol} \newtheorem{proof}{Proof:}
  

%
%


\DefineNamedColor{named}{RedViolet} {cmyk}{0.07,0.90,0,0.34}

\newcommand{\VorC}{\mathcal{V}}%
\newcommand{\VorX}[2][\!]{\VorC\pth{#2}}%
\newcommand{\SVorC}{\mathcal{SV}}%
\newcommand{\SVorX}[2][\!]{\SVorC\pth{#2}}%
\newcommand{\WVorC}{\mathcal{WV}}%
\newcommand{\WVorX}[2][\!]{\WVorC\pth{#2}}%

\newcommand{\half}{\mathsf{H}}

\newcommand{\dist}[2]{\left\| #1 - #2 \right\| }

%
%

%
%


\newcommand{\pth}[1]{\mleft({#1}\mright)}

\newcommand{\pbrcx}[1]{\left[ {#1} \right]}
\newcommand{\Ex}[2][\!]{\mathop{\mathbf{E}}#1\pbrcx{#2}}

\newcommand{\Prob}[1]{\mathop{\mathbf{Pr}}\!\pbrcx{#1}}
\newcommand{\sep}[1]{\,\left|\, {#1} \bigr.\right.}

\newcommand{\SiteSet}{{S}}

\newcommand{\Stretch}{t}

\definecolor{blue25}{rgb}{0,0,0.55}%

\newcommand{\VorCell}[3][\!]{\VorC_\mathrm{cell}\pth{#2, #3}}
\newcommand{\SVorCell}[3][\!]{\SVorC_\mathrm{cell}\pth{#2, #3}}

\newcommand{\remove}[1]{}
%
%


\newcommand{\eps}{{\varepsilon}}%

\newcommand{\atgen}{\symbol{'100}}
\newcommand{\BenThanks}[1]{\thanks{Department of Computer Science;
      University of Texas at Dallas; Richardson, TX 75080, USA; 
      {\tt benjamin.raichel\atgen{}utdallas.edu}; {\tt
         \url{http://utdallas.edu/\string~benjamin.raichel}.} #1}}
\newcommand{\ChenglinThanks}[1]{\thanks{Department of Computer Science;
      University of Texas at Dallas; Richardson, TX 75080, USA; 
      {\tt cxf160130\atgen{}utdallas.edu}. #1}}


\usepackage[utf8]{inputenc}
\usepackage{algorithm}
\usepackage[noend]{algpseudocode}

\title{Linear Expected Complexity for Directional and Multiplicative Voronoi Diagrams}%

\begin{document}
\author{%
Chenglin Fan\ChenglinThanks{Work on this paper was partially
      supported by NSF CRII Award 1566137 and CAREER Award 1750780.}
\and
Benjamin Raichel\BenThanks{Work on this paper was partially
      supported by NSF CRII Award 1566137 and CAREER Award 1750780.}
}
\date{}

\maketitle

\begin{abstract}
While the standard unweighted Voronoi diagram in the plane has linear worst-case complexity, many of its natural generalizations do not. 
This paper considers two such previously studied generalizations, namely multiplicative and semi Voronoi diagrams. These diagrams both have quadratic worst-case complexity, though here we show that their expected complexity is linear for certain natural randomized inputs.
Specifically, we argue that the expected complexity is linear for: (1) semi Voronoi diagrams when the visible direction is randomly sampled, and (2) for multiplicative diagrams when either weights are sampled from a constant-sized set, or the more challenging case when weights are arbitrary but locations are sampled from a square.
\end{abstract}

\newpage
\pagenumbering{arabic}

\section{Introduction}
Given a set of point sites in the plane, the Voronoi diagram is the corresponding partition of the plane into cells, where each cell is the locus of points in the plane sharing the same closest site. 
This fundamental structure has a wide variety of applications.  When coupled with a point location data structure, it can be used to quickly answer nearest neighbor queries.
Other applications include robot motion planning, modeling natural processes in areas such as biology, chemistry, and physics, and moreover, the dual of the Voronoi diagram is the well known Delaunay triangulation.  
See the book \cite{Franz2013} for an extensive coverage of the topic.

Many of the varied applications of Voronoi diagrams require generalizing its definition in one way or another, 
such as adding weights or otherwise altering the distance function, moving to higher dimensions, or considering alternative types of sites. 
While some of these generalizations retain the highly desirable linear worst-case complexity of the standard Voronoi diagram, many others unfortunately have quadratic worst-case complexity or more.

Within the field of Computational Geometry, particularly in recent years, there have been a number of works analyzing the expected complexity of various geometric structures when the input is assumed to have some form of randomness. 
Here we continue this line of work, by studying the expected complexity of two previously considered Voronoi diagram variants.  

\myparagraph{Directional and weighted Voronoi diagrams.}
A natural generalization to consider is when each site is only visible to some subset of the plane.
Here we are interested in the so called visual restriction Voronoi diagram (VRVD)~\cite{AurenhammerSXZ14,FanLWZ14}, 
where a given site $p$ is only visible to the subset of points contained in some cone with base point $p$ and angle $\alpha_p$. 
These diagrams model scenarios where the site has a restricted field of view, such as may be the case with various optical sensors or human vision.
For example, in a football game, each player has their own field of view at any given time, and the location of the ball in the VRVD tells us which player is the closest to the ball among those who can see it.
When the visible region for each site is a half-plane whose boundary passes through the site (i.e., a VRVD where $\alpha_p=\pi/2$ for each site $p$), 
such diagrams are called semi Voronoi diagrams~\cite{ChengLX10}.
Just like general VRVD's, semi Voronoi diagrams have $\Theta(n^2)$ worst-case complexity.  
Our expected analysis is shown for the semi Voronoi diagram case, however, we remark a similar analysis implies the same bounds hold more generally for VRVD's. 

The other generalization we consider is when sites have weights.  
There are many natural ways to incorporate weights into the distance function of each site. 
Three of the most common are additive, power, and multiplicative Voronoi diagrams (see \cite{Franz2013}).
(For brevity, throughout we use the prefix \emph{multiplicative}, rather than the more common \emph{multiplicatively-weighted}.)
For additive Voronoi diagrams the distance from a point $x$ in the plane to a given site $p$ is $d(x,p) = ||x-p||+\alpha_p$, 
for some constant $\alpha_p$, which can vary for each site.  
For power diagrams the distance is given by $d(x,p) = ||x-p||^2-\alpha_p^2$. 
For multiplicative diagrams the distance is given by $d(x,p) = \alpha_p\cdot ||x-p||$.
The worst-case complexity for additive and power diagrams is only linear. 
Here our focus is on multiplicative diagrams, whose worst-case complexity is known to be $\Theta(n^2)$~\cite{AurenhammerE84}. 
These diagrams are used to model, for example, crystal growth where crystals grow together from a set of sites at different rates.

\myparagraph{Previous expected complexity bounds.}
There are many previous results on the expected complexity of various geometric structures under one type of randomness assumption or another.  
Here we focus on previous results relating to Voronoi diagrams.  
For point sites in $\Re^d$, the worst-case complexity of the Voronoi diagram is known to be $\Theta(n^{\lceil d/2\rceil})$, 
however, if the sites are sampled uniformly at random from a $d$-ball or hypercube, for constant $d$, then the expected complexity is $O(n)$ \cite{bdms-accvd-05,d-hdvdl-89}.
For a set of $n$ point sites on a terrain consisting of $m$ triangles, Aronov \textit{et al.}~\cite{AronovBT08} showed that, if the terrain satisfies certain realistic input assumptions then the worst-case complexity of the geodesic 
Voronoi diagram is $\Theta(m+n\sqrt{m})$.
Under a relaxed set of assumption, Driemel \textit{et al.}~\cite{DriemelHR16} showed that the expected complexity is only $O(n+m)$, when the sites are sampled uniformly at random from the terrain domain.

Relevant to the current paper, Har-Peled and Raichel~\cite{Har-PeledR15}  
showed that, for any set of point site locations in the plane and any set of weights, if the weights are assigned to the points according to a random permutation, 
then the expected complexity of the multiplicative Voronoi diagram is $O(n\log^2 n)$. 
The motivation for this work came from Agarwal \textit{et al.}\cite{AgarwalHKS14}, 
who showed that if one randomly fattens a set of segments in the plane, by taking the Minkowski sum of each segment with a ball of random radius, then the expected complexity of the union is near linear, despite having quadratic worst-case complexity.
In a follow-up work to \cite{Har-PeledR15}, Chang \textit{et al.}\cite{ChangHR16} defined the candidate diagram, a generalization of weighted diagrams to multi-criteria objective functions, 
and showed that under similar randomized input assumptions the expected complexity of such diagrams is $O(n\operatorname{polylog} n)$.

\myparagraph{Our results and significance.}
Our first result concerns semi Voronoi diagrams, where the visible region of each site is a half-plane whose bounding line passes through the site, and where the worst-case complexity is quadratic.  
For any set of site locations and bounding lines, we show that if the visible side of each site's bounding line is sampled uniformly at random, 
then the expected complexity of the semi Voronoi diagram is linear, and $O(n\log^3 n)$ with high probability. 
To achieve this we argue that our randomness assumption implies that any point in the plane is likely to be visible by one of its $k$ nearest neighbors, for large enough $k$.
Thus within each cell of the order-$k$ Voronoi diagram, 
one can argue the complexity of semi Voronoi diagram is $O(k^2)$, and so summing over all $O(nk)$ cells gives an $O(nk^3)$ bound.
This is a variant of the candidate diagram approach introduced in \cite{Har-PeledR15}. 
Unfortunately, in general this approach requires $k$ to be more than a constant, which will not yield the desired linear bound. 
Thus here, in order to get a linear bound, we give a new refined version of this approach, by carefully allowing $k$ to vary as needed, 
in the end producing a diagram which is the union of order-$k$ cells, for various values of $k$.

The main focus of the paper is the second part  
concerning multiplicative Voronoi diagrams, where the distance to each site is the Euclidean distance multiplied by some site-dependent weight, and where the worst-case complexity is quadratic. 
We first argue that if the weights are sampled from a set with constant size, then interestingly a similar refined candidate diagram approach yields a linear bound. 
Our main result, however, considers the more challenging case  
when no restrictions are made on the weights, but instead the site locations are sampled uniformly at random from the unit square.
For this case we show that in any grid cell of side length $1/\sqrt{n}$ in the unit square the expected complexity of the multiplicative diagram is $O(1)$.
This implies an $O(n)$ expected bound for the multiplicative diagram over the whole unit square. 
This improves over the $O(n\log^2 n)$ bound of \cite{Har-PeledR15} for this case, by using an interesting new approach which ``stretches'' sites about a given cell based on their weights, 
thus approximately transforming the weighted diagram into an unweighted one with respect to a given cell. As this main result is technically very challenging, it is placed last.

It is important to note the significance of our bounds being linear. 
Given a randomness assumption, one wishes to prove an optimal expected complexity bound, and for the diagrams we consider linear bounds are immediately optimal. 
However, even when a linear bound seems natural, many standard approaches to bounding the expected complexity yield additional $\log$ factors.
That is, not only are linear bounds desirable, but also they are nontrivial to obtain. However, when they are obtained, they reveal the true expected complexity of the structure, free from artifacts of the analysis.

\section{ Preliminaries}\seclab{prelim}
\myparagraph{The standard Voronoi diagram:}
Let $S=\{s_1,s_2,...,s_n \}\subset \Re^2$ be a set of $n$ point \emph{sites} in the plane.
Let $\dist{x}{y}$ denote the Euclidean distance from $x$ to $y$, 
and for two sites $s_i,s_j\in S$ let $\beta(s_i,s_j)$ denote their bisector, that is the set of points $x$ in plane such that $\dist{s_i}{x}=\dist{s_j}{x}$. 
Any site $s_i\in S$ induces a distance function $f_i(x) = \dist{s_i}{x}$ defined for any point $x$ in the plane.
For any subset $T \subseteq S$, the \emph{Voronoi cell} of $s_i\in T$ with respect to $T$, $\VorCell{s_i}{T}=\{x\in \Re^2 \mid \forall s_j \in T \quad f_{i}(x) \leq f_{j}(x) \}$, 
is the locus of points in the plane having $s_i$ as their closest site from $T$. 
We define the \emph{Voronoi diagram} of $T$, denoted $\VorX{T}$, as the partition of the plane into Voronoi
cells induced by the minimization diagram (see \cite{go-hdcg-04}) of the set of distance functions $\{f_i \mid s_i \in T\}$,
that is the projection onto the plane of the lower envelope of the surfaces defined by these bivariate functions.

One can view the union, $U$, of the boundaries of the cells in the Voronoi diagram as a planar graph.
Specifically, define a \emph{Voronoi vertex} as any point in $U$ which is equidistant to three sites in
$S$ (happening at the intersection of bisectors). For simplicity, we make the general position assumption
that no point is equidistant to four or more sites. Furthermore, define a
\emph{Voronoi edge} as any maximal connected subset of $U$ which does not contain a Voronoi vertex. 
(For each edge to have two endpoints we include the “point” at infinity, 
i.e., the graph is defined on the stereographic projection of the plane onto the sphere.)
The complexity of the Voronoi diagram is then defined as the total number of 
Voronoi edges, vertices, and cells.  
As the cells 
are simply connected sets, 
which are faces of a straight-line planar graph, the overall complexity is $\Theta(n)$.

\myparagraph{Order-$k$ Voronoi diagram:} 
Let $\SiteSet$ be a set of $n$ point sites in the plane.  
The \emph{order-$k$ Voronoi diagram} of $\SiteSet$ is the 
partition of the plane into cells, where each cell is the locus of
points having the same set of $k$ nearest sites of $\SiteSet$ 
(the ordering of these $k$ sites by distance can vary within the cell). 
It is not hard to see that this again defines a straight-line partition of the plane into 
cells where the edges on the boundary of a cell are composed of bisector pieces.
The worst-case complexity of this diagram is $\Theta\pth{k(n-k)}$ 
(see \cite[Section 6.5]{Franz2013}).

We also consider the diagram where every point in a cell not only has the same $k$ nearest sites, but also the same ordering of distances to these sites.
We refer to this as the order-$k$ \emph{sequence} Voronoi diagram, known to have $O(nk^3)$ worst-case complexity (see \apndref{shallow}).

\myparagraph{Semi Voronoi diagram:}
Let $S=\{s_1,s_2,...,s_n \}$  be a set of $n$ point sites in the plane, 
where for each $s_i$ there is an associated closed half-plane $\half(s_i)$, whose boundary passes through $s_i$. 
We use $L(s_i)$ to denote the bounding line of $\half(s_i)$.
For any point $x$ in the plane and any site $s_i\in S$, we say that $x$ and $s_i$ are visible to each other when $x\in \half(s_i)$.
Given a point $x$, let $S(x)=\{s_i\in S \mid x\in \half(s_i)\}$ denote the set of sites which are visible to $x$.

For any subset $T \subseteq S$, define the \emph{semi Voronoi cell} of $s_i\in T$ with respect to $T$ as, $\SVorCell{s_i}{T}=\{x\in \half(s_i) \mid \forall s_j \in T\cap S(x) \quad \dist{x}{s_i} \leq \dist{x}{s_j} \}$.
It is possible that there are points in the plane which are not visible by any site in $S$. 
If desired, this technicality can be avoided by adding a pair of sites far away from $S$ which combined can see the entire plane. 
As before, semi Voronoi cells define a straight-line partition of the plane, where now the 
edges on the boundary of a cell are either portions of a bisector or of a half-plane boundary.
In the worst case, the semi Voronoi diagram can have quadratic complexity~\cite{FanLWZ14}.

\myparagraph{Random semi Voronoi diagram:}
We consider semi Voronoi diagrams where the set of sites $S=\{s_1,\ldots,s_n\}$ is allowed to be any fixed set of $n$ points in general position.
For each site $s_i$, the line bounding the half-plane of $s_i$, $L(s_i)$, is allowed to be any fixed line in $\Re^2$ passing through $s_i$.  
Such a line defines two possible visible closed half-spaces.  
We assume that independently for each site $s_i$, one of these two spaces is sampled uniformly at random. 

An alternative natural assumption is that the normal of the half-plane for each site is sampled uniformly at random from $[0,2\pi)$. 
Note our model is strictly stronger, that is any bound we prove will imply the same bound for this alternative formulation.  
This is because one can think of sampling normals from $[0,2\pi)$, as instead first sampling directions for the bounding lines from $[0,\pi)$, and then 
sampling one of the two sides of each line for the normal.

\myparagraph{Multiplicative Voronoi diagram:} 
Let $S=\{s_1,s_2,...,s_n \}$ be a set of $n$ point sites in the plane, 
where for each site $s_i$ there is an associated weight $w_i>0$.
%
Any site $s_i\in S$ induces a distance function $f_i(x) = w_i\cdot \dist{s_i}{x}$ defined for any point $x$ in the plane.
For any subset $T \subseteq S$, the \emph{Voronoi cell} of $s_i\in T$ with respect to $T$, $\VorCell{s_i}{T}=\{x\in \Re^2 \mid \forall s_j \in T \quad f_{i}(x) \leq f_{j}(x) \}$, 
is the locus of points in the plane having $s_i$ as their closest site from $T$. 
The \emph{multiplicative Voronoi diagram} of $T$, denoted $\WVorX{T}$, is the partition of the plane into Voronoi
cells induced by the minimization diagram of the distance functions $\{f_i \mid s_i \in T\}$.

Note that unlike the standard Voronoi diagram, the bisector of two sites is in general an Apollonius disk, potentially leading to disconnected cells.  
Ultimately, the diagram still defines a planar arrangement and so its complexity, denoted by $|\WVorX{S}|$, can still be defined as the number of edges, 
faces, and vertices of this arrangement. 
Note the edges are circular arcs and straight line segments and thus are still constant complexity curves.
In the worst case, the multiplicative Voronoi diagram has $\Theta(n^2)$ complexity \cite{AurenhammerE84}.

\section{The Expected Complexity of Random Semi Voronoi Diagrams}\seclab{semi}
\subsection{The probability of covering the plane}

As it is used in our later calculations, we first bound the probability that for a given subset $X$ of $k$ sites of $S$ that there exists a point in the plane not visible to any site in $X$. 

\begin{lemma}\lemlab{probability}
For any set $X=\{x_1,\ldots,x_k\}$ of $k$ sites $\Prob{(\bigcup_{x_i\in X} \SVorCell{x_i}{X} )\neq \Re^2} \leq (k(k+1)+2)/2^{k+1}= O(k^2/2^k)$.
\end{lemma}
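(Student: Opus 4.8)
The plan is to reduce the event ``the plane is not covered'' to a deterministic count of cells in an arrangement of $k$ lines, after observing that the whole event is governed by the $k$ independent binary side-choices. First I would reinterpret the union of the cells. A point $x$ lies in $\SVorCell{x_i}{X}$ for some $i$ precisely when $x$ is visible to at least one site of $X$: if $x\in\half(x_j)$ for some $j$, then taking $x_i$ to be a Euclidean-nearest site to $x$ among $\{x_j\in X : x\in\half(x_j)\}$ satisfies the defining inequalities, so $x\in\SVorCell{x_i}{X}$; conversely, membership in any cell forces $x\in\half(x_i)$. Hence $\bigcup_{x_i\in X}\SVorCell{x_i}{X}$ is exactly the visible region, and its complement is the set of points seen by no site, i.e. $\bigcap_i G_i$, where $G_i=\Re^2\setminus\half(x_i)$ is the open half-plane on the non-visible side of the fixed line $L(x_i)$. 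So the event in the lemma is exactly $\bigcap_i G_i\neq\emptyset$.

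Next I would use the sampling model. The line $L(x_i)$ is fixed and the random draw selects one of its two open sides for $G_i$, uniformly and independently over $i$. Thus the only randomness is a uniformly chosen sign pattern $(\epsilon_1,\ldots,\epsilon_k)\in\{+,-\}^k$ prescribing a side of each $L(x_i)$, and each of the $2^k$ patterns has probability $2^{-k}$. Consequently $\Prob{\bigcap_i G_i\neq\emptyset}=N/2^k$, where $N$ is the number of patterns whose $k$ chosen open half-planes share a common point.

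The crux is to identify $N$ with the number of regions of the line arrangement $\mathcal{A}=\mathcal{A}(L(x_1),\ldots,L(x_k))$. A pattern satisfies $\bigcap_i G_i\neq\emptyset$ iff some point lies strictly on the prescribed side of every line, i.e. iff that pattern labels a (two-dimensional) region of $\mathcal{A}$; since the half-planes are open, no pattern can be realized only on an edge or vertex, and distinct regions carry distinct patterns. Hence $N$ equals the number of regions of $\mathcal{A}$, which for $k$ lines is at most $1+k+\binom{k}{2}$, attaining this value in general position and being strictly smaller under any parallelism or concurrency. As we only need an upper bound, this maximum applies even though the fixed lines $L(x_i)$ need not be in general position. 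Plugging in yields $\Prob{\bigcap_i G_i\neq\emptyset}\le (1+k+\binom{k}{2})/2^k=(k(k+1)+2)/2^{k+1}$, as claimed.

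The main obstacle is this third step: making the correspondence between ``side choices with non-empty intersection'' and regions of $\mathcal{A}$ precise, especially the open-versus-closed bookkeeping that rules out patterns realized only on lower-dimensional faces, together with the verification that degeneracies among the $L(x_i)$ can only lower the region count so the clean bound survives. The first two steps are essentially routine once the visible-region reformulation is in hand.
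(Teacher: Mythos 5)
Your proof is correct and takes essentially the same approach as the paper: both arguments reduce the event to the arrangement of the $k$ bounding lines and use the bound of $1+k+\binom{k}{2}$ on its number of faces, with each face being fully invisible with probability $2^{-k}$ by independence of the side choices. Your framing as an exact count of realizable bad sign patterns (in bijection with regions) versus the paper's union bound over faces is only a cosmetic difference.
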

\begin{proof}
Consider the arrangement of the $k$ bounding lines $L(x_1),\ldots,L(x_k)$.
Let $\mathcal{F}$ denote the set of faces in this arrangement (i.e., the connected components of the complement of the union of lines), 
and note that $|\mathcal{F}| \leq k(k+1)/2+1 = O(k^2)$.
Observe that for any face $f\in \mathcal{F}$ and any fixed site $x_i\in X$, either every point in $f$ is visible by $x_i$ or no point in $f$ is visible by  $x_i$. 
Moreover, the probability that face $f$ is not visible by site $x_i$ is $\Prob{\half(x_i) \cap f =\emptyset} \leq 1/2$. 
Hence the probability that a face $f$ is not visible by any of the $k$ sites is
\[
\Prob{\bigcup_{x_i\in X} \half(x_i) \cap f =\emptyset}  \leq 1/2^k.
\]
Hence the probability that at least one face in $\mathcal{F}$ is is not visible by any sites in $X$ is
\[
\Prob{\!\pth{\bigcup_{x_i\in X} \SVorCell{x_i}{X} } \neq \Re^2} \leq \sum_{f\in \mathcal{F}} \Prob{\bigcup_{x_i\in X} \half(x_i) \cap f =\emptyset} 
\leq \!\frac{k(k\!+\!1)\!+\!2}{2^{k+1}}
= O\pth{\frac{k^2}{2^k}}.
\]
\end{proof}

\subsection{A simple near linear bound}

Ultimately we can show that the expected complexity of a random semi Voronoi diagram is linear, however, 
here we first show that \lemref{probability} implies a simple near linear bound which also holds with high probability.
Specifically, we say that a quantity is bounded by $O(f(n))$ with high probability, if for any constant $\alpha$ there exists a constant $\beta$, 
depending on $\alpha$, such that the quantity is at most $\beta \cdot f(n)$ with probability at least $1-1/n^\alpha$.

\begin{lemma} \lemlab{polybound}
 Let $S=\{s_1,\ldots, s_n\}\subset \Re^2$ be a set of $n$ sites, where each site has a corresponding line $L(s_i)$ passing through $s_i$. 
 For each $s_i$, sample a half-plane $\half(s_i)$ uniformly at random from the two half-planes whose boundary is $L(s_i)$.
 Then the expected complexity of the semi Voronoi diagram on $S$ is $O(n \log^3 n)$, 
 and moreover this bound holds with high probability.
\end{lemma}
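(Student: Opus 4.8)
The plan is to relate the semi Voronoi diagram of $S$ to the order-$k$ Voronoi diagram of $S$ for a suitable $k = \Theta(\log n)$, and to use \lemref{probability} to argue that, with high probability, the structure of the semi Voronoi diagram inside each order-$k$ cell is controlled by only $k$ sites. First I would fix a parameter $k$ (to be chosen later) and consider the order-$k$ Voronoi diagram of the point locations $S$, which partitions the plane into $O\pth{k(n-k)} = O(nk)$ cells. Crucially, since the site \emph{locations} are fixed and only the half-plane orientations are random, the set $X_C \subseteq S$ of $k$ nearest sites associated to a cell $C$ is a deterministic set, so \lemref{probability} may be applied to it directly.

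The key local observation is the following. Call a cell $C$ \emph{good} if every point of $C$ is visible to at least one site of $X_C$. For a point $x$ in a good cell, every site of $S \setminus X_C$ is at least as far from $x$ as every site of $X_C$, and some site of $X_C$ is visible to $x$; hence the nearest \emph{visible} site of $x$ must lie in $X_C$. Therefore, inside $C$ the semi Voronoi diagram of $S$ coincides with the semi Voronoi diagram of $X_C$ alone, and since the latter has $O(k^2)$ worst-case complexity, the restriction of the semi Voronoi diagram of $S$ to a good cell has complexity $O(k^2)$. The event that a fixed cell $C$ is \emph{not} good is contained in the event that the cells $\SVorCell{x}{X_C}$ of $X_C$ fail to cover all of $\Re^2$, so by \lemref{probability} it has probability $O(k^2/2^k)$.

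Next I would take a union bound over the $O(nk)$ cells: the probability that some cell is not good is $O\pth{nk \cdot k^2/2^k} = O(nk^3/2^k)$. Choosing $k = c \log n$ for a sufficiently large constant $c$ makes this at most $1/n^\alpha$ for any prescribed constant $\alpha$. Conditioned on every cell being good, summing the per-cell bound gives total complexity $\sum_C O(k^2) = O(nk \cdot k^2) = O(nk^3) = O(n \log^3 n)$, where the $O(nk)$ complexity of the order-$k$ diagram itself is only a lower-order additive term; this yields the high-probability bound. The expected bound then follows by combining cases: with probability at least $1 - 1/n^\alpha$ the complexity is $O(n\log^3 n)$, and otherwise it is at most the worst-case $O(n^2)$, so taking $\alpha \ge 2$ makes the failure contribution $O(n^2)/n^\alpha = O(1)$ and the expectation is $O(n\log^3 n)$.

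The step I expect to require the most care is the per-cell complexity accounting: confirming that the total complexity of the semi Voronoi diagram is bounded by the sum of the per-cell complexities once one overlays it with the order-$k$ diagram, so that vertices and edge fragments created along cell boundaries are correctly charged (contributing only the $O(nk)$ size of the order-$k$ diagram). The probabilistic part, by contrast, is essentially immediate once \lemref{probability} is in hand, since the decay $O(k^2/2^k)$ is fast enough that $k = \Theta(\log n)$ simultaneously keeps the per-cell cost polylogarithmic and drives the union-bound failure probability below any inverse polynomial.
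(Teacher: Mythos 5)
Your proposal is correct and follows essentially the same route as the paper: partition by the order-$k$ Voronoi diagram with $k=c\log n$, apply \lemref{probability} to the (deterministic) $k$-nearest set of each cell, union bound over the $O(nk)$ cells, charge $O(k^2)$ per good cell, and absorb the $O(n^2)$ worst case into the failure probability. The only cosmetic difference is that the paper first triangulates the order-$k$ diagram so each cell has constant boundary complexity, which disposes of the clipping issue you flag at the end; your alternative accounting via the convexity of order-$k$ cells works just as well.
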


\begin{proof}
Let $k=c \log n$, for some constant $c$. 
Consider the order-$k$ Voronoi diagram of $S$. First triangulate this diagram so the boundary of each cell has constant complexity. 
(Note triangulating does not asymptotically change the number of cells.) Fix any cell $\Delta$ in this triangulation, 
which in turn fixes some (unordered) set $X$ of $k$-nearest sites. 
By \lemref{probability},
\[
 \Prob{\pth{\bigcup_{x_i\in X} \SVorCell{x_i}{X}} \neq \Re^2} = O(k^2/2^k) = O((c\log n)^2/2^{c\log n}) =O(1/n^{c-\eps'}),
\]
for any arbitrarily small value $\eps'>0$.
Thus with polynomially high probability for every point in $\Delta$ its closest visible site will be in $X$.
Now let $T$ be the set of all $O(k(n-k)) = O(n\log n)$ triangles in the triangulation of the order-$k$ diagram. 
Observe that the above high probability bound applies to any triangle $\Delta\in T$. 
Thus taking the union bound we have that with probability at least $1-1/n^{c-(1+\eps)}$ (where $\eps>\eps'>0$ is an arbitrarily small value), 
simultaneously for every triangle $\Delta$, every point in $\Delta$ will be visible by one of its $k$ closest sites. 
Let $e$ denote this event happening (and $\overline{e}$ denote it not happening).
Conditioning on $e$ happening, there are only $O(k)=O(\log n)$ relevant sites which contribute to the semi Voronoi diagram of any cell $\Delta$. 
Thus the total complexity of the semi Voronoi diagram restricted to any cell $\Delta$ is at most $O(\log^2 n)$, 
as the semi Voronoi diagram has worst-case quadratic complexity~\cite{FanLWZ14}. (Note that as $\Delta$ is a triangle,  
we can ignore the added complexity due to clipping the semi Voronoi diagram of these sites to $\Delta$.)
On the other hand, if $\overline{e}$ happens, then 
the worst-case complexity of the entire semi Voronoi diagram is still $O(n^2)$.

Now the complexity of the semi Voronoi diagram is bounded by the sum over the cells in the triangulation of the complexity of the diagram restricted to each cell.
Thus the above already implies that with high probability the complexity of the semi Voronoi diagram is $O\pth{\sum_{\Delta\in T} \log^2 n} = O(n\log^3 n)$.
As for the expected value, by choosing $c$ sufficiently large,
\begin{align*}
  &= \Ex{|\SVorX{S}| \sep e}\Prob{e} + \Ex{|\SVorX{S}| \sep {\overline{e}} }\Prob{\overline{e}}
 = O\pth{\sum_{\Delta\in T} \log^2 n}\Prob{e} + O(n^2)\Prob{\overline{e}}\\
 &= O(n\log^3 n)\Prob{e} + O(n^2)\Prob{\overline{e}}
 = O(n\log^3 n)\Prob{e} + O(n^2)\cdot (1/n^{c-(1+\eps)})\\
 &= O(n\log^3 n) + O(1/n^{c-(3+\eps)}) = O(n\log^3 n).
\end{align*}
\end{proof}

\subsection{An optimal linear bound}

The previous subsection partitioned the plane based on the order-$k$ Voronoi diagram, for $k=c\log n$, 
and then argued that simultaneously for all cells one of the $k$ nearest sites will be visible. 
This argument is rather coarse, but instead of using a fixed large value for $k$, if one is more careful 
and allows $k$ to vary, then one can argue the expected complexity is linear.
Note that in the following, rather than using the standard unordered order-$k$ Voronoi diagram, we use the more refined 
order-$k$ \emph{sequence} Voronoi diagram as defined above.

\begin{theorem}\thmlab{refined}
 Let $S=\{s_1,\ldots, s_n\}\subset \Re^2$ be a set of $n$ sites, where each site has a corresponding line $L(s_i)$ passing through $s_i$. 
 For each $s_i$, sample a half-plane $\half(s_i)$ uniformly at random from the set of two half-planes whose boundary is $L(s_i)$ (i.e., each has $1/2$ probability).
 Then the expected complexity of the semi Voronoi diagram on $S$ is $\Theta(n)$. \thmlab{linear}
\end{theorem}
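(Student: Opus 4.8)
The lower bound $\Omega(n)$ is immediate, since the $n$ sites already force at least $n$ nonempty cells, so I focus on the matching upper bound. The plan is to replace the single coarse scale $k = c\log n$ used in \lemref{polybound} with an adaptive, level-by-level charging scheme. The key quantity is, for a point $\query$ in general position, its \emph{visibility rank} $\rho(\query)$: the smallest $j$ such that the $j$-th nearest site of $S$ to $\query$ is visible to $\query$. Because the locations and lines are fixed and only the side of each $L(s_i)$ is an independent fair coin, for any fixed $\query$ each of its nearest sites is visible with probability exactly $1/2$ independently of the others, so $\Prob{\rho(\query) > j} = 1/2^{j}$. This pointwise geometric decay (cleaner than the set-based bound of \lemref{probability}) is what ultimately beats the polynomial growth in the number of combinatorial features, and is the engine of the whole argument.

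Next I would set up the candidate features to be charged. For each $k$, overlay the order-$k$ sequence Voronoi diagram with the $k$ bounding lines $L(s_{(1)}),\ldots,L(s_{(k)})$ of the (fixed, ordered) $k$ nearest sites of each of its cells, clipped to that cell. Within every resulting face the distance order of the top $k$ sites and their visibility pattern are both constant, so the nearest visible site among the top $k$ is determined. The true semi Voronoi diagram $\SVorX{S}$ is a coarsening of the limit of these refinements, so each of its vertices, edges, and faces is realized by one of these \emph{candidate features}; I assign to each candidate feature a \emph{level} equal to the distance rank at that point of the deepest site defining it. Crucially, the number of candidate features of level $k$ is a \emph{deterministic} quantity depending only on the fixed locations and lines: it is controlled by the $O(nk^3)$ worst-case size of the order-$k$ sequence Voronoi diagram together with the $O(k^2)$ refinement contributed by the arrangement of the $k$ lines inside each cell, giving a bound of the form $O(n\,\mathrm{poly}(k))$ for the features newly introduced at level $k$.

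The payoff comes from the probabilistic charge. A candidate feature of level $k$ can belong to $\SVorX{S}$ only if, in the face witnessing it, the $k-1$ sites nearer than its defining rank-$k$ site are all invisible; as these are $k-1$ independent fair coins, independent of the deterministic geometry, this event has probability at most $2^{-(k-1)}$. By linearity of expectation, the expected number of level-$k$ features of $\SVorX{S}$ is at most $O(n\,\mathrm{poly}(k))\cdot 2^{-(k-1)}$, and summing over all scales gives $\Ex{|\SVorX{S}|} \le \sum_{k\ge 1} O(n\,\mathrm{poly}(k))\, 2^{-(k-1)} = O(n)$, since $\sum_{k} \mathrm{poly}(k)/2^{k}$ converges. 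The rare deep tail, where no site among a large prefix is visible and the refinement must descend very deep, is absorbed exactly as the event $\overline{e}$ was in \lemref{polybound}: its probability is polynomially small while the worst-case complexity is only $O(n^2)$, so its contribution is $o(1)$.

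The main obstacle is the bookkeeping of the second step: defining candidate features and their levels so that (i) every real vertex, edge, and face of $\SVorX{S}$ is charged to some candidate feature, (ii) the associated invisibility event is correctly identified and genuinely factors as $k-1$ independent coins giving the clean $2^{-(k-1)}$ factor, and (iii) the per-level count is simultaneously deterministic and polynomial in $k$ while linear in $n$. The delicacy is that visibility is not constant across an order-$k$ cell, so one must track the line-arrangement refinement inside each cell and argue that the combinatorial count (the distances, hence the order-$k$ sequence diagram and the line arrangement, all fixed) is fully decoupled from the random side-choices; only then does the product of a deterministic $O(n\,\mathrm{poly}(k))$ count with an independent $2^{-(k-1)}$ probability go through, and only then does the geometric series collapse the polylogarithmic slack of \lemref{polybound} down to the optimal linear bound.
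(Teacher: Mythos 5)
Your proposal is essentially the paper's proof reorganized: the paper runs the same multi-scale argument top-down, iteratively refining order-$k$ \emph{sequence} Voronoi cells and marking a cell final once every point in it sees one of its $k$ nearest sites (pairing the $O(k^2/2^k)$ survival probability from \lemref{probability} with the $O(nk^3)$ cell count and an $O(k^2)$ cost per final cell), whereas you charge features bottom-up with a pointwise $2^{-(k-1)}$ bound --- both collapse to the same convergent series $\sum_k n\,\mathrm{poly}(k)/2^k$. One correction to your probabilistic charge: for a candidate feature defined by several visible sites (e.g.\ a semi Voronoi vertex equidistant to three visible sites of ranks $k-2$, $k-1$, $k$), the event you state --- all $k-1$ nearer sites invisible --- is not a necessary condition (two of those nearer sites must be \emph{visible} for the feature to exist); the correct event, namely that all sites nearer than the shallowest defining site are invisible while the defining sites are visible, still has probability $O(2^{-k})$ by the same independence, so the bound survives once the event is stated per feature type.
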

\begin{proof}
Consider the partition of the plane by the first order Voronoi diagram of $S$, i.e., the standard Voronoi diagram. We iteratively refine the cells of this partition into higher order sequence Voronoi diagram cells. At each iteration we have a collection of order-$i$ sequence Voronoi diagram cells, and for each cell we either mark it final and stop processing, or further refine the cell into its constituent order-$(i+1)$ sequence cells. Specifically, a cell $\Delta$ is marked final in the $i$th iteration if every point in $\Delta$ is visible by one of its $i$ nearest sites, or when $i=n$ and the cell cannot be refined further.
The process stops when all cells are marked final. 
Below we use $F^k$ to denote the set of cells which were marked final in the $k$th iteration.

Let $C^k$ be the set of all cells of the order-$k$ sequence Voronoi diagram of $S$.
Note that the order-$k$ sequence Voronoi diagram can be constructed by iteratively refining lower order cells, and hence any cell seen at any point in the above process is a cell of the order-$k$ sequence diagram for some value of $k$. 
So consider any cell $\Delta_j\in C^k$ of the order-$k$ sequence diagram of $S$. Let $\#(\Delta_j)$ be the number of order-$(k+1)$ sequence diagram cells inside $\Delta_j$, 
and let $X_j$ be the indicator variable for the event that $\Delta_j$ is refined into its constituent order-$(k+1)$ sequence cells in the $k$th round of the above iterative process.
Note that $\Prob{X_j=1}$ is upper bounded by the probability that $\Delta_j$ has not been marked final by the end of the $k$th round, 
which in turn is bounded by \lemref{probability}. Thus letting $Z^{k+1}$ be the random variable denoting the total number of order-$(k+1)$ sequence cells created in the above process, we have
\[
 \Ex{Z^{k+1}}=\Ex{\sum_{\Delta_j\in C^{k}} \!\! \#(\Delta_j)\cdot X_j} =\!\! \sum_{\Delta_j\in C^{k}}\!\! \#(\Delta_j) \cdot \Ex{X_j}  
 = O\!\left(\frac{k^2}{2^k}\right) \cdot \! \sum_{\Delta_j\in C^{k}}\!\!\! \#(\Delta_j)= O\!\left(\frac{nk^5}{2^k}\right),
\]
as $\sum \#(\Delta_j)$ is at most the total number of cells in the order-$k$ sequence Voronoi diagram, which is bounded by $O(nk^3)$ (see \apndref{shallow}).

Using the same argument as in \lemref{polybound}, the complexity of the random semi Voronoi diagram is bounded by $\sum_{k=1}^n |F^k|\cdot O(k^2)$.
(If $k=n$ then a cell may not be fully visible, though the bound still applies as the worst-case complexity is $O(n^2)$.)
Thus by the above, the expected complexity is 
\[
 \Ex{|\SVorX{S}|} \leq \sum_{k=1}^n \Ex{|F^k|}\cdot O(k^2)  \leq \sum_{k=1}^n \Ex{Z^k}\cdot O(k^2)  \leq O\left(\sum_{k=1}^n \frac{nk^7}{2^k}\right)=O(n). 
\]
\end{proof}

\begin{remark}
 For simplicity the results above were presented for semi Voronoi diagrams, though they extend to the more general VRVD case, where the visibility region of $s_i$ is determined by a cone with base point $s_i$ and angle $\alpha_i$, where the orientation of the cone is sampled uniformly at random. Specifically, the plane can be covered by a set of $(2\pi)/(\alpha_i/2) = 4\pi / \alpha_i$ cones around $s_i$ each  with angle $\alpha_i/2$. Any one of these smaller cones is completely contained in the randomly selected $\alpha_i$ cone with probability $\alpha_i / (4\pi)$. If the $\alpha_i$ are lower bounded by a constant $\beta$, one can then prove a variant of \lemref{probability} (and hence \lemref{polybound} and \thmref{refined}), as the arrangement of all these smaller cones still has $O(k^2)$ complexity, and the probability a face is not visible to any site is still exponential in $k$ but with base proportional to $1-\beta/(4\pi)$.
\end{remark}

\section{The Expected Complexity of Multiplicative Voronoi Diagrams}\seclab{weighted}

In this section we consider the expected complexity of multiplicative Voronoi diagrams under different randomness assumptions.  
First, by using the approach from the previous section, we show that for any set of site locations, if each site samples its weight from a set of constant size $c$, then the expected complexity is $O(n c^6)$.
Next, we consider the case when the sites can have arbitrary weights, but the site locations are sampled uniformly at random from the unit square.
As making no assumptions on the weights makes the problem considerably more challenging, as a warm-up, we first assume the weights are in an interval $[1,c]$.
In this case, we consider a $1/\sqrt{n}$ side length grid, and argue that locally in each grid cell the expected complexity of the multiplicative diagram is $O(c^4)$ (inspired by the approach in \cite{DriemelHR16}), 
and thus over the entire unit square the complexity is $O(n c^4)$. 
Finally, we remove the bounded weight assumption, and argue that for any set of weights, the expected complexity is linear when site locations are 
sampled uniformly at random from the unit square, by introducing the notion of ``stretched'' sites.

\subsection{Sampling from a small set of weights}

\begin{lemma}\lemlab{finite}
Let $W=\{w_1,w_2,...,w_c\}$ be a set of  non-negative real weights and let $S=\{s_1,s_2,s_3,\allowbreak \ldots \allowbreak,s_n\}$ be a set of point sites in the plane, where each site in $S$ is assigned a weight independently and uniformly at random from $W$.
Then the expected complexity of the multiplicative Voronoi diagram of $S$ is $O(n \cdot c^6)$. 
\end{lemma}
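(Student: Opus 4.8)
The plan is to adapt the refined order-$k$ sequence Voronoi diagram argument of \thmref{refined} essentially verbatim, replacing the visibility-covering event of \lemref{probability} with an analogous ``weight-capturing'' event suited to multiplicative distance. The crucial observation is that, exactly as in the semi case, within any cell $\Delta$ of the order-$k$ (Euclidean) sequence Voronoi diagram of $S$ the set $X$ of $k$ nearest sites is constant across all points of $\Delta$, so the only randomness relevant to $\Delta$ is the assignment of weights to the sites of $X$. This lets a single event, depending only on the weights in $X$, control every query point of $\Delta$ simultaneously.

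First I would establish the geometric capturing claim that plays the role of \lemref{probability}. Let $w_{\min}=\min W$ and suppose some site $s_{(i)}\in X$ has been assigned weight $w_{\min}$. For a query point $x\in\Delta$, order the sites of $X$ as $s_{(1)},\ldots,s_{(k)}$ by increasing distance to $x$ and write $r=\dist{s_{(k)}}{x}$. Any site $s\notin X$ satisfies $\dist{s}{x}\ge r\ge \dist{s_{(i)}}{x}$, and hence $w_s\dist{s}{x}\ge w_{\min}\dist{s}{x}\ge w_{\min}\dist{s_{(i)}}{x}$, so $s$ cannot be the multiplicative nearest neighbor of $x$. Thus whenever $X$ contains a minimum-weight site, for \emph{every} $x\in\Delta$ the multiplicative nearest neighbor lies in $X$, the diagram restricted to $\Delta$ is determined entirely by the sites of $X$, and so it has complexity $O(k^2)$. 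Since weights are drawn independently and uniformly from the $c$-element set $W$ and the minimum is unique, the probability that \emph{no} site of $X$ receives weight $w_{\min}$ is exactly $(1-1/c)^k\le e^{-k/c}$; this upper-bounds the probability that $\Delta$ fails to be captured by its $k$ nearest sites.

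With this in hand I would run the iterative refinement of \thmref{refined}: start from the standard Voronoi diagram and, at round $k$, mark a cell final once it is captured by its $k$ nearest sites, and otherwise refine it into its constituent order-$(k+1)$ sequence cells. Letting $X_j$ indicate that an order-$k$ cell $\Delta_j$ is refined in round $k$, we get $\Prob{X_j=1}\le (1-1/c)^k$, so the expected number of order-$(k+1)$ cells created is $\Ex{Z^{k+1}} = (1-1/c)^k\sum_{\Delta_j}\#(\Delta_j) = O\pth{nk^3(1-1/c)^k}$, using the $O(nk^3)$ bound on the order-$k$ sequence diagram from \apndref{shallow}. Charging $O(k^2)$ to each cell finalized in round $k$ (triangulating as in \lemref{polybound} so clipping costs nothing) then yields
\[
  \Ex{|\WVorX{S}|} \;\le\; \sum_{k=1}^n \Ex{Z^k}\cdot O(k^2) \;=\; O\!\pth{n\sum_{k=1}^\infty k^5\pth{1-\tfrac1c}^k} \;=\; O(n c^6),
\]
where the last step uses $\sum_{k}k^5 q^k=\Theta\pth{(1-q)^{-6}}=\Theta(c^6)$ with $q=1-1/c$.

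I expect the main obstacle to be pinning down the capturing claim cleanly rather than the final summation. One must verify that the event ``$X$ contains a minimum-weight site'' is genuinely invariant over the whole cell $\Delta$, so that it simultaneously governs the multiplicative nearest neighbor of every query point, and that sites lying exactly on the order-$k$ boundary do not break the inequality $\dist{s}{x}\ge r$, which the general position assumption handles. Once the capturing claim is secured, the tail estimate is routine, and it is precisely the growth $\sum_k k^5 q^k=\Theta(c^6)$ that produces the stated $O(nc^6)$ bound.
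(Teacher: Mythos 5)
Your proposal matches the paper's proof essentially step for step: the same iterative refinement of order-$k$ sequence Voronoi cells as in \thmref{refined}, the same finalization rule (a cell is captured once one of its $k$ nearest sites draws the minimum weight, which fails with probability $(1-1/c)^k$), the same $O(nk^3)$ cell count and $O(k^2)$ per-cell charge, and the same evaluation $\sum_{k} k^5(1-1/c)^k = O(c^6)$. The only difference is that you make the geometric capturing claim explicit where the paper leaves it implicit; that is a correct clarification, not a deviation.
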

\begin{proof}
Consider the partition of the plane determined by the unweighted first order Voronoi diagram of $S$, i.e., the standard Voronoi diagram. 
Following the strategy of the proof of \thmref{linear}, we iteratively refine the cells of this partition into higher order sequence Voronoi diagram cells, 
except now a cell $\Delta$ is marked final in the $i$th iteration if at least one of its $i$-nearest sites has weight $w_m =\min\{w_1,w_2,w_3,...,w_c \}$.
Note that the probability that $k$ sites are all assigned weight larger than $w_m$ is $(1-1/c)^k$.

Using the same notation as in the proof of \thmref{linear}, we have
\begin{align*}
\Ex{Z^{k+1}}
&=\Ex{\sum_{\Delta_j\in C^{k}} \!\! \#(\Delta_j)\cdot X_j} 
= \sum_{\Delta_j\in C^{k}}\! \#(\Delta_j) \cdot \Ex{X_j}\\  
&= O\left((1-1/c)^k\right) \cdot \! \sum_{\Delta_j\in C^{k}} \#(\Delta_j)
= O\left({nk^3}{(1-1/c)^k}\right).
\end{align*}
As the worst-case complexity of the multiplicative Voronoi diagram of $k$ sites is $O(k^2)$, 
overall the complexity of the multiplicative Voronoi diagram is bounded by $\sum_{k=1}^n |F^k|\cdot O(k^2)$. 
Thus by the above, the expected complexity is 
\[
 \Ex{|\WVorX{S}|} \leq \sum_{k=1}^n \Ex{|F^k|}\cdot O(k^2)  
 \leq \sum_{k=1}^n \Ex{Z^k}\cdot O(k^2)  
 \leq O\left(\sum_{k=1}^n {nk^5}{(1-1/c)^k}\right)
 =O(n \cdot c^6),
\]
%
where the last step is obtained by viewing the sum as a power series in $x=(1-1/c)$,
\begin{align*}
&\!\!\!\!\!\!\!\!\sum_{k>0} k^5 (1-1/c)^k = 
\sum_{k>0} k^5 x^k 
= \pth{x \cdot \frac{d}{dx}} \sum_{k>0} k^4 x^k
= \pth{x \cdot \frac{d}{dx}}^5 \sum_{k>0} x^k 
= \pth{x \cdot \frac{d}{dx}}^5 \frac{x}{1-x}\\
&\!\!\!\!\!\!\!\!= \frac{(x^5+26x^4+66x^3+26x^2+x)}{(1-x)^6}
=120c^6-360c^5+390c^4-180c^3+31c^2-c = O(c^6).
\end{align*}
\end{proof}

\subsection{Sampling site locations with bounded weights}

In this section we argue that the expected complexity of the multiplicative Voronoi diagram is linear when the site locations are uniformly sampled and the weights are in a constant spread interval.  
In the next section we remove the bounded weight assumption.
Thus the current section can be viewed as a warm-up, and serves to illustrate the extent to which assuming bounded weights simplifies the problem. However, as the results of the next section subsume those here, this section can be safely skipped if desired. 

The following fact is used both in the proof of the lemma below and the next section.

\begin{fact}\factlab{varHelp}
Consider doing $m$ independent experiments, where the probability of success for each experiment is $\alpha$. 
Let $X$ be the total number of times the experiments succeed.  Then $\Ex{X^2} \leq \alpha m + \alpha^2 m^2 = \Ex{X}+\Ex{X}^2$.
\end{fact}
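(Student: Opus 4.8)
The plan is to decompose $X$ into a sum of indicator variables and expand the second moment directly. Let $X_i$ be the indicator random variable for the event that the $i$th experiment succeeds, so that $X = \sum_{i=1}^m X_i$, and by assumption the $X_i$ are independent with $\Ex{X_i} = \alpha$. Since each $X_i$ takes values in $\{0,1\}$, we have $X_i^2 = X_i$, and hence $\Ex{X_i^2} = \alpha$.

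Next I would expand the square using linearity of expectation:
\[
\Ex{X^2} = \Ex{\pth{\sum_{i=1}^m X_i}^2} = \sum_{i=1}^m \Ex{X_i^2} + \sum_{i\neq j}\Ex{X_i X_j}.
\]
The diagonal terms contribute $\sum_i \Ex{X_i^2} = m\alpha$ by the observation above, and by independence each off-diagonal term equals $\Ex{X_i}\Ex{X_j} = \alpha^2$, so the $m(m-1)$ off-diagonal terms contribute $m(m-1)\alpha^2$ in total. Combining these,
\[
\Ex{X^2} = m\alpha + m(m-1)\alpha^2 = m\alpha + m^2\alpha^2 - m\alpha^2 \leq m\alpha + m^2\alpha^2,
\]
where the final inequality simply discards the nonnegative quantity $m\alpha^2$. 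Since $\Ex{X} = m\alpha$ by linearity, the right-hand side is exactly $\Ex{X} + \Ex{X}^2$, which gives the claimed bound.

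There is essentially no obstacle here: the statement is an elementary second-moment computation, equivalent to the standard fact that a $\mathrm{Binomial}(m,\alpha)$ variable has variance $m\alpha(1-\alpha) \leq \Ex{X}$. The only point requiring any care is tracking the off-diagonal count $m(m-1)$ rather than $m^2$, which is precisely what produces the slack in the inequality (the discarded $-m\alpha^2$ term). The result is phrased as an upper bound rather than an equality so that it plugs cleanly into the later variance-type arguments, where only a bound on $\Ex{X^2}$ is needed.
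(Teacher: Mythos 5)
Your proof is correct and amounts to the same argument as the paper's, which simply invokes $\Ex{X^2} = \text{\bf Var}[X]+\Ex{X}^2 = m\alpha(1-\alpha)+(m\alpha)^2$ for a binomial variable; your indicator-variable expansion just derives that variance identity explicitly. No issues.
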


Note that the above fact holds since $X$ is a binomial random variable, $Bin(\alpha,m)$, and so 
$\Ex{X^2} = \text{\bf Var}[X]+\Ex{X}^2 = m\alpha(1-\alpha)+(m\alpha)^2$.

\begin{lemma}
\lemlab{boundedWeights}
Let $S=\{s_1,s_2,s_3,...,s_n\}$ be a set of point sites in the plane, where for some value $c\geq 1$, each site in $S$ is assigned a weight $w_i\in [1,c]$.
Suppose that the location of each site in $S$ is sampled uniformly at random from the unit square $U$. 
Then the expected complexity of the multiplicative Voronoi diagram of $S$ within $U$ is $O(n \cdot c^4)$.
\end{lemma}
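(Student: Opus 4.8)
The plan is to overlay the unit square $U$ with a grid $\mathcal{G}$ of $n$ axis-parallel cells of side length $1/\sqrt{n}$, and to bound the expected complexity of $\WVorX{S}$ cell by cell. Since clipping the diagram to the cells of $\mathcal{G}$ can only increase the number of features inside $U$, the complexity of $\WVorX{S}$ within $U$ is at most $\sum_{\square \in \mathcal{G}} |\WVorX{S}\cap \square|$, so by linearity of expectation it suffices to show $\Ex{|\WVorX{S}\cap\square|}=O(c^4)$ for a fixed cell $\square$; summing over the $n$ cells then yields the claimed $O(n\cdot c^4)$ bound. Let $N$ be the number of sites whose multiplicative cell meets $\square$ (the sites \emph{relevant} to $\square$). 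As the multiplicative diagram of $N$ sites has worst-case complexity $O(N^2)$, and clipping to the constant-complexity region $\square$ adds only $O(N)$, we have $|\WVorX{S}\cap\square| = O(N^2)$, reducing everything to proving $\Ex{N^2} = O(c^4)$.

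The next step is to confine the relevant sites geometrically. Let $o$ be the center of $\square$ and let $\rho = \min_i \dist{o}{s_i}$ be the unweighted nearest-neighbor distance of $o$, realized by $s_{i_0}$. For every $x\in\square$ we have $\dist{x}{s_{i_0}} \le \rho + O(1/\sqrt n)$, so the weighted distance from $x$ to $s_{i_0}$ is at most $c\pth{\rho+O(1/\sqrt n)}$. Hence if $s_j$ is relevant there is $x\in\square$ with $w_j\dist{x}{s_j} \le c\pth{\rho+O(1/\sqrt n)}$, and since $w_j\ge 1$ this forces $\dist{o}{s_j} \le c\rho + O(c/\sqrt n)$. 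Thus every relevant site lies in the disk of radius $c\rho+O(c/\sqrt n)$ about $o$, and a site at distance $r$ from $o$ can be relevant only if no site lies within distance roughly $r/c$ of $o$.

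I would then bound $\Ex{N^2}$ by a ring decomposition combined with \factref{varHelp}. Partition the plane into rings $R_t=\Set{x}{t/\sqrt n \le \dist{o}{x} < (t+1)/\sqrt n}$, let $\#R_t$ count the sites in $R_t$, and let $N_t$ count the \emph{relevant} sites in $R_t$, so $N=\sum_{t\ge 0}N_t$. By the previous step, $N_t>0$ only if the inner disk $D_t$ of radius $\approx t/(c\sqrt n)$ is empty of sites, an event $E_t$; hence $N_t^2 \le (\#R_t)^2\cdot \mathbf{1}[E_t]$. Conditioned on $E_t$ the $n$ locations stay independent and uniform on $U\setminus D_t$, so $\#R_t$ is binomial with success probability $O(t/n)$, and \factref{varHelp} gives conditional second moment $O(t)+O(t^2)=O(t^2)$; meanwhile $\Prob{E_t} \le \pth{1 - \Omega\pth{t^2/(c^2 n)}}^n \le e^{-\Omega(t^2/c^2)}$, since at least a quarter of $D_t$ lies in $U$. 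Combining, $\Ex{N_t^2} \le \Prob{E_t}\cdot \Ex{(\#R_t)^2 \mid E_t} = O(t^2)\,e^{-\Omega(t^2/c^2)}$, and the triangle inequality for the $L^2$-norm gives
\[
\sqrt{\Ex{N^2}} \le \sum_{t\ge 0}\sqrt{\Ex{N_t^2}} \le \sum_{t\ge 0} O(t)\,e^{-\Omega(t^2/c^2)} = O(c^2),
\]
so $\Ex{N^2}=O(c^4)$, as required.

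The main obstacle is the correlation between $\#R_t$ and the event $E_t$ in this last step: the count of sites in $R_t$ and the emptiness of the inner disk are positively correlated (an empty inner disk frees sites to land in $R_t$), so one cannot simply multiply $\Ex{(\#R_t)^2}$ by $\Prob{E_t}$. The conditioning above is what resolves this, since conditioned on $E_t$ the surviving locations remain independent and uniform on $U\setminus D_t$, turning $\#R_t$ into a genuine binomial to which \factref{varHelp} applies. The only remaining care is the boundary effect in the estimate for $\Prob{E_t}$ (handled by the quarter-disk argument) and the negligible tail from rings with $t=\Omega(c\sqrt n)$, where $N_t\le n$ but $\Prob{E_t}$ is exponentially small in $n$.
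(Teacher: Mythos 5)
Your proposal is correct and follows essentially the same route as the paper: a $1/\sqrt{n}$ grid, the observation that all sites relevant to a cell lie in an annulus whose inner radius is the nearest-site distance $\rho$ to the cell's center and whose outer radius is a factor $c$ larger, a binomial second-moment bound via \factref{varHelp}, and an $e^{-\Omega(t^2/c^2)}$ tail for the emptiness of the inner disk. The only difference is bookkeeping: the paper conditions on the bucketed value of $\rho$ and applies the law of total expectation, while you decompose the relevant-site count over rings and recombine with the $L^2$ triangle inequality; both yield the same $O(c^4)$ per cell.
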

\newcommand{\boundedweightsproof}{
\begin{proof}
Place a regular grid over $U$, where grid cells have side length $1/\sqrt{n}$.
Fix any grid cell $\square = \square_{x,y}$, where $(x,y)\in \sqrt{n}\times \sqrt{n}$.  
We now argue that the expected complexity of the multiplicative Voronoi diagram within $\square$ is $O(c^4)$, 
and thus by linearity of expectation, the expected complexity over all $n$ grid cells in $U$ is $O(n c^4)$.

Let $\rho$ be the random variable denoting the unweighted distance of the closest site in $S$ to the center of the grid cell $\square$, 
and let $X_\rho$ be the random variable denoting the number of points which contribute to the multiplicative Voronoi diagram in $\square$ conditioned on the value $\rho$.
(For now ignore the contribution of the point at distance exactly $\rho$.)
Observe that any point in $S$ which contributes to the multiplicative Voronoi diagram within $\square$ must lie within the annulus centered at the center of $\square$, 
with inner radius $\rho$ and outer radius $c(\rho+\sqrt{2/n})$. 
Conditioned on the value $\rho$, let $\alpha_\rho$ be the probability for a point to fall in this annulus, and let $X_\rho$ be the binomial random variable $Bin(\alpha_\rho, n)$ 
representing the number of points which fall into this annulus.
For now assume $\rho\leq 1/4$, in which case the region outside the disk centered at the center of $\square$ and with radius $\rho$ has area at least $3/4$.
We then have that 
\begin{align*}
 \Ex{X_\rho}&\leq n \alpha_{\rho} \leq n\frac{(\pi (c(\rho+\sqrt{2/n}))^2- \pi\rho^2)}{3/4}\\
 &= O(n c^2(\rho^2(1-1/c^2)+\rho/\sqrt{n}+1/n ))
 = O(c^2+n(c\rho)^2).
\end{align*}
Let $Y_\rho$ be the random variable denoting the complexity of the multiplicative Voronoi diagram within $\square$ when conditioned on the value $\rho$. 
As the worst-case complexity of the multiplicative Voronoi diagram is quadratic, we have $Y_\rho = O((1+X_\rho)^2) = O(1+X_\rho+X_\rho^2)$, where the plus 1 counts the point at distance exactly $\rho$.
Thus using \factref{varHelp}, and again assuming $\rho\leq 1/4$, 
\[
 \Ex{Y_\rho} = O(1+\Ex{X_\rho}+\Ex{X_\rho^2}) = O(c^2+n(c\rho)^2 + (c^2+n(c\rho)^2)^2) = O(c^4(1+n\rho^2+n^2\rho^4)).
\]
Now consider the event that $\rho\in [i/\sqrt{n},(i+1)/\sqrt{n}]$ for some integer $i$. 
For this to happen, the open disk with radius $i/\sqrt{n}$ centered at the center of $\square$ must be empty, 
and one of the $n$ points must lie in the annulus with inner radius $i/\sqrt{n}$ and outer radius $(i+1)/\sqrt{n}$.  Therefore,
\begin{align*}
 \Prob{\rho\in [i/\sqrt{n},(i+1)/\sqrt{n}]} \leq n\cdot (\pi ((i+1)/\sqrt{n})^2 - \pi (i/\sqrt{n})^2)\cdot (1-\pi (i/\sqrt{n})^2)^{n-1}\\
 =\pi(2i+1)\cdot (1-\pi i^2/n)^{n-1}
 \leq \pi(2i+1) e^{-\pi i^2 (n-1)/n}
 \leq \pi(2i+1) e^{-i^2}.
\end{align*}
Furthermore, by the above, when $\rho\in [i/\sqrt{n},(i+1)/\sqrt{n}]$ and $\rho\leq 1/4$, we have 
\[
 \Ex{Y_\rho}=O(c^4(1+n\rho^2+n^2\rho^4)) = O(c^4(1+(i+1)^2+(i+1)^4)),
\]
and for $\rho\geq 1/4$ we have the trivial bound $\Ex{Y_\rho}=O(n^2)$.

Finally, let $Y$ be the random variable denoting the complexity of the multiplicative Voronoi diagram within $\square$.
By the law of total expectation we have, 

\begin{align*}
 \Ex{Y} &\leq \left(\sum_{i=0}^{\sqrt{n}/4 - 1}  \! \Prob{\rho\in \left[\frac{i}{\sqrt{n}},\frac{(i\!+\!1)}{\sqrt{n}}\right]} \cdot O(c^4(1+(i\!+\!1)^2+(i\!+\!1)^4))\right) \\
       & ~~~~+ \left(\sum_{i=\sqrt{n}/4}^{\sqrt{n}} \! \Prob{\rho\in \left[\frac{i}{\sqrt{n}},\frac{(i\!+\!1)}{\sqrt{n}}\right]} \cdot O(n^2)\right)\\
       &\leq \left(\sum_{i=0}^{\sqrt{n}/4 - 1}  \! \frac{\pi(2i+1)}{e^{i^2}} \cdot O(c^4(1+(i\!+\!1)^2+(i\!+\!1)^4))\!\right) 
       + \left(\sum_{i=\sqrt{n}/4}^{\sqrt{n}} \! \frac{\pi(2i+1)}{e^{i^2}} \cdot O(n^2)\!\right)\\
 &= O(1)+\! \sum_{i=0}^{\sqrt{n}/4-1} \frac{\pi(2i+1)}{e^{i^2}} \cdot O(c^4(1+(i\!+\!1)^2+(i\!+\!1)^4))\\ 
 &=O(1)+\! \sum_{i=0}^{\sqrt{n}/4} \frac{2i+1}{e^{i^2}} \cdot O(c^4(i\!+\!1)^4)
 = O(c^4)+ O\left(\sum_{i=1}^{\sqrt{n}/4} \frac{c^4 i^5}{e^{i^2}}\right) = O(c^4).
\end{align*}
\end{proof}%
}
\boundedweightsproof

\subsection{Sampling sites locations in general}

In this section we argue that the expected complexity of the multiplicative Voronoi diagram is linear when the site locations are uniformly sampled. 
Here the weights can be any arbitrary set of positive values.
Without loss of generality we can assume the smallest weight is exactly $1$, as dividing all weights by the same positive constant does not change the diagram.
Throughout, $m$ denotes the maximum site weight, hence all weights are in the interval $[1,m]$.

Let $\sigma$ be an arbitrary point in the unit square. The high level idea is that we want to apply a transformation to the sites such that the weighted Voronoi diagram around $\sigma$ can be interpreted as an unweighted Voronoi diagram.  Specifically, for a site $s$ with weight $w$ and distance $d=||s-\sigma||$, let the \emph{stretched site} of $s$ with respect to $\sigma$, denoted by $\Stretch$, be the point at Euclidean distance $w\cdot d$ from $\sigma$ which lies on the ray from $\sigma$ through $s$.  That is, the weighted distance from $s$ to $\sigma$ is the same as the unweighted distance from $\Stretch$ to $\sigma$.  

So let $\sigma$ be an arbitrary fixed point in the unit square, $S=\{s_1,\ldots,s_n\}$ be a set of sites with weights $\{w_1,\ldots,w_n\}\subset [1,m]$ whose positions have been uniformly sampled from the unit square, 
and let $T=\{\Stretch_1,\ldots,\Stretch_n\}$ be the corresponding set of stretched sites. 

Let $\gamma=\sqrt{1/(2n)}$. Our goal is to argue that for any arbitrary choice of $\sigma$ in $U$ the expected complexity of the multiplicative Voronoi diagram in the ball $B(\sigma,\gamma)$ is constant, 
i.e., $\Ex{|\WVorX{S}\cap B(\sigma,\gamma)|}=O(1)$. Then by the grid argument in the previous section this immediately implies a linear bound on the expected complexity of the overall diagram in $U$.  
Namely, place a uniform grid over $U$, where the grid cell side length is $1/\sqrt{n}$.  Then as each cell is contained in a ball $B(\sigma,\gamma)$ for some $\sigma$, and there are $n$ cells overall, 
by the linearity of expectation the expected complexity of the overall diagram is $O(n)$.

At a high level, the idea is simple.  We wish to argue that sites whose stretched location is far from $\sigma$ will be blocked from contributing by sites whose stretched location is closer to $\sigma$.  
However, putting this basic plan into action is tricky and requires handling various edge cases. 
In particular, later it will become clear why we need to consider the following cases for where sites lie relative to $\sigma$.

\begin{definition}\deflab{sets}
For a fixed point $\sigma$ in $U$, let $X$ be the subset of $S$ which falls in $B(\sigma,4\gamma)$, i.e., $X=S\cap B(\sigma,4\gamma)$. Let $Y$ be the complement set, i.e., $Y=S\setminus X$.  
\end{definition}

\begin{remark}\remlab{fixed}
In the following, we typically condition on the set of sites which fall in $Y$ as being fixed, but not the actual precise locations of those sites.
We refer to this as ``Fixing $Y$''. Ultimately the statements below hold regardless of what sites actually fall in $Y$.
\end{remark}

Fix $Y$.  Let $r_1$ be the radius  such that the expected number of stretched sites from $Y$ that are contained in $B(\sigma,r_1)$ is $n\cdot \pi(16\gamma)^2$.
%
Observe that a site can only be moved further from $\sigma$ after stretching, thus $r_1\geq 16\gamma$.
(Potentially $r_1$ is significantly larger.)   
Also, note that as $Y$ is fixed, the value of $r_1$ is fixed.

For any integer $j>0$, let $D_j$ be the disk with radius $r_j=r_1 \times 2^{j-1}$, 
centered at $\sigma$. 
%
Moreover, define the rings $R_{j+1} = D_{j+1} \setminus D_{j}$, for any $j> 0$.

\begin{lemma}
\lemlab{move}
Consider two sites $s_j \in Y$, and $s_i$ such that either $s_i\in Y$ or $s_i\in X$ with $w_i \leq w_j$. 
For any $k'\geq k\geq 1$, if $\Stretch_i\in D_k$ and $\Stretch_j\in R_{k'+2}$, then $s_j$ cannot contribute to the multiplicative Voronoi diagram in $B(\sigma,\gamma)$.
\end{lemma}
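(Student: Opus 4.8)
The plan is to exhibit $s_i$ as a single site that weighted-dominates $s_j$ everywhere in $B(\sigma,\gamma)$: if $w_i\dist{s_i}{x} < w_j\dist{s_j}{x}$ for every $x\in B(\sigma,\gamma)$, then $s_j$ is never the weighted-nearest site inside the ball, so its cell does not meet $B(\sigma,\gamma)$ and it contributes nothing to the diagram there. Writing $d_i=\dist{s_i}{\sigma}$, $d_j=\dist{s_j}{\sigma}$, and using the shorthand $A=\dist{\Stretch_i}{\sigma}=w_id_i$ and $B=\dist{\Stretch_j}{\sigma}=w_jd_j$ for the stretched distances, the whole argument reduces to comparing two quantities obtained from the triangle inequality.

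First I would record the two basic estimates for any $x$ with $\dist{x}{\sigma}\le\gamma$. The triangle inequality gives $\dist{s_i}{x}\le d_i+\gamma$ and $\dist{s_j}{x}\ge d_j-\gamma$, so after multiplying by the weights,
\[
 w_i\dist{s_i}{x}\le A + w_i\gamma \qquad\text{and}\qquad w_j\dist{s_j}{x}\ge B - w_j\gamma.
\]
Hence it suffices to establish the clean inequality $A + w_i\gamma + w_j\gamma < B$, which no longer refers to $x$ at all.

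Next I would control the two perturbation terms using membership in $Y$. Since $s_j\in Y$ we have $d_j\ge 4\gamma$, so $w_j\gamma \le w_jd_j/4 = B/4$; this bound is available in both cases. For $w_i\gamma$ I split on the hypothesis. If $s_i\in Y$ then likewise $d_i\ge 4\gamma$ yields $w_i\gamma\le A/4$, and the target becomes $(5/4)A + B/4 < B$, i.e. $B > (5/3)A$. If instead $s_i\in X$ with $w_i\le w_j$, I bound $w_i\gamma\le w_j\gamma\le B/4$, and the target becomes $A + B/2 < B$, i.e. $B > 2A$. Finally I would supply the separation from the ring hypotheses: since $k\le k'$, membership $\Stretch_i\in D_k$ gives $A\le r_k\le r_{k'}=r_1 2^{k'-1}$, while $\Stretch_j\in R_{k'+2}$ gives $B > r_{k'+1}=r_1 2^{k'}=2\,r_1 2^{k'-1}\ge 2A$. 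Thus $B>2A$, which settles both cases simultaneously (and is stronger than the $B>(5/3)A$ needed in the first).

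I expect the main obstacle to be the second case: a site $s_i\in X$ lying very close to $\sigma$ may carry an enormous weight, so that $\Stretch_i$ still lands inside $D_k$ even though $w_i\gamma$ is not small on its own, and the naive bound $w_i\dist{s_i}{x}\le 5w_i\gamma$ is far too weak. The hypothesis $w_i\le w_j$ is exactly what tames this, allowing $w_i\gamma$ to be absorbed into the same $B/4$ budget as $w_j\gamma$; and the extra buffer ring encoded by the ``$+2$'' in $R_{k'+2}$ is precisely what guarantees the factor-two gap $B>2A$ that makes $A+B/2<B$ hold. Care must also be taken that all inequalities stay strict, which they do since $B > r_{k'+1} \ge 2A$ is already strict.
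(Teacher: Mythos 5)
Your proof is correct and follows essentially the same route as the paper's: reduce via the triangle inequality to a pointwise-free inequality between the stretched distances, use the $4\gamma$ exclusion radius of $Y$ to absorb the $w_i\gamma+w_j\gamma$ perturbation, and use the factor-two ring gap to get $B>2A$. The only difference is cosmetic --- you split cases on whether $s_i\in X$ or $s_i\in Y$, while the paper splits on $w_i\le w_j$ versus $w_i>w_j$ (the latter forcing $s_i\in Y$); both cover the same situations with the same ingredients.
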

\begin{proof}
    For site $s_i$, let $d_i=||s_i-\sigma||$ and $d'_i=w_i\cdot d_i$.  Similarly define $d_j$ and $d_j'$ for site $s_j$.
    Note that the furthest weighted distance of a point in $B(\sigma,\gamma)$ to $s_i$ is $w_i \cdot (d_i+\gamma)$, 
    and the closest weighted distance of a point in $B(\sigma,\gamma)$ to $s_j$ is $w_j \cdot (d_j-\gamma)$.
    Thus it suffices to argue $w_i \cdot (d_i+\gamma) <w_j \cdot (d_j-\gamma)$, or equivalently $(w_i+w_j)\gamma<d'_j-d'_i$.
    To that end, observe $d'_j-d'_i \geq d'_j - d_j'/2= d'_j/2$. Thus we only need to argue $(w_i+w_j)\gamma< d'_j/2$.
    
    Case 1, $w_i \leq w_j$: In this case $(w_i+w_j)\gamma \leq w_j \cdot 2\gamma< w_j\cdot d_j/2 = d'_j/2$.
    
    Case 2, $w_i > w_j$:  
    In this case $(w_i+w_j)\gamma \leq w_i \cdot 2\gamma< w_i \cdot d_i/2 = d'_i/2 \leq d'_j/2 $.
\end{proof}

Note that a site with weight $1$ does not move after stretching. Thus as $S$ always has a site with weight $1$, there must be at least one stretched site in $U$.
Therefore, if we define $Z-1$ to be the smallest value of $j$ such that $U\subseteq D_j$, then $D_{Z-1}$ must contain a stretched site. 
Thus the above lemma implies any site which contributes to the multiplicative Voronoi diagram within $B(\sigma,\gamma)$ must lie within $D_{Z}$ after being stretched, 
and so going forward we only consider stretched sites in $D_{Z}$.

\begin{definition}\deflab{probability}
Fix $Y$, and consider any $1 \leq j\leq Z-2$. For $s_i\in Y$, let $p_{i,j}$ denote the probability that $t_i$ is located in $D_j$.  
For $s_i\in Y$, let $q_{i,j}$ denote the probability that $s_i$ is located in $U\setminus B(\sigma,16\gamma)$ and $t_i$ is located in $R_{j+2}\cup R_{j+1}$, 
conditioned on the event that no stretched sites from $Y$ are located in $D_j$.\footnote{Note for $j \leq Z-2$, $D_j\subsetneq U$, thus the condition that no stretched sites are in $D_j$ has non-zero probability.}
\end{definition}

\begin{lemma}
Fix $Y$. For any $1 \leq j \leq Z-2$ and $s_i\in Y$ we have $q_{i,j} \leq 32\cdot  p_{i,j}$.
\lemlab{relation}
\end{lemma}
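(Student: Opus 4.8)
The plan is to reduce the inequality to a comparison of areas of annuli centered at $\sigma$, and then to control the unavoidable clipping of these annuli by $\partial U$ using the convexity of $U$. Throughout fix the site $s_i\in Y$, write $w=w_i$ and $a=r_j/w$, and let $g(\rho)=\mathrm{area}(B(\sigma,\rho)\cap U)$, so that (positions being uniform on the unit square) $g$ is nondecreasing with $g(\infty)=1$ and area equals probability. First I would rewrite both quantities as area ratios. Having fixed $Y$, the point $s_i$ is uniform on $U\setminus B(\sigma,4\gamma)$, and $t_i\in D_j$ is equivalent to $\|s_i-\sigma\|\le a$; hence $p_{i,j}=(g(a)-g(4\gamma))/(1-g(4\gamma))$. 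For $q_{i,j}$, the conditioning event ``no stretched site of $Y$ lies in $D_j$'' is an intersection of independent single-site events, so it affects $s_i$ only through the factor $t_i\notin D_j$, i.e. $\|s_i-\sigma\|>a$. Combining this with $s_i\in Y$ and the requirements $s_i\notin B(\sigma,16\gamma)$ and $t_i\in R_{j+1}\cup R_{j+2}=D_{j+2}\setminus D_j$ (so $a<\|s_i-\sigma\|\le 4a$, using $r_{j+2}=4r_j$) gives $q_{i,j}=(g(4a)-g(\max(16\gamma,a)))/(1-g(a))$. Both expressions vanish unless $a>4\gamma$, so I assume $a>4\gamma$ from here on; the footnote's fact that $g(r_j)<1$ for $j\le Z-2$ guarantees $g(a)<1$, so denominators are positive.

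The key step is the geometric bound $g(4a)-g(\max(16\gamma,a))\le 16\,(g(a)-g(4\gamma))$, and this is where boundary clipping is handled. Consider the dilation $\phi(x)=\sigma+4(x-\sigma)$, which scales areas by $16$. When $a\le 16\gamma$ it maps the annulus $B(\sigma,a)\setminus B(\sigma,4\gamma)$ onto $B(\sigma,4a)\setminus B(\sigma,16\gamma)$; when $a>16\gamma$ it maps $B(\sigma,a)\setminus B(\sigma,a/4)$ onto $B(\sigma,4a)\setminus B(\sigma,a)$, with $a/4>4\gamma$. In either case the target annulus is exactly $\{x:\max(16\gamma,a)<\|x-\sigma\|\le 4a\}$. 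Because $U$ is convex and $\sigma\in U$, we have $U\subseteq\phi(U)$: for $x\in U$ the point $\phi^{-1}(x)=\sigma+\tfrac14(x-\sigma)$ is a convex combination of $\sigma$ and $x$, hence lies in $U$. Therefore every point of the target annulus lying in $U$ is the $\phi$-image of a point of the source annulus lying in $U$, so the target area is at most $16$ times the source area, which is at most $16\,(g(a)-g(4\gamma))$ (using $g(a/4)\ge g(4\gamma)$ in the second case). I would also record the trivial bound $g(4a)-g(\max(16\gamma,a))\le 1-g(a)$, which yields $q_{i,j}\le 1$.

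Finally I would finish with a case split on the size of $g(a)$. If $g(a)\le 1/2$ then $1-g(a)\ge 1/2$, so the key bound gives $q_{i,j}\le 32\,(g(a)-g(4\gamma))\le 32\,p_{i,j}$, using $1-g(4\gamma)\le 1$. If $g(a)>1/2$ then $q_{i,j}\le 1$, while $p_{i,j}\ge g(a)-g(4\gamma)>1/2-g(4\gamma)\ge 1/4$ for $n$ large enough that $g(4\gamma)\le 1/4$ (which holds since $g(4\gamma)\le\pi(4\gamma)^2=8\pi/n$); hence $32\,p_{i,j}>8\ge q_{i,j}$. In both cases $q_{i,j}\le 32\,p_{i,j}$, as claimed.

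I expect the main obstacle to be precisely the boundary clipping: since $\sigma$ may lie arbitrarily close to $\partial U$, the areas $g(\rho)$ are not simply $\pi\rho^2$, and the naive identity ``scaling radii by $4$ scales annulus area by $16$'' fails as an equality. The convexity containment $U\subseteq\phi(U)$ is the device that salvages the needed one-sided inequality, and the delicate bookkeeping is getting the dilation factor $4$ to line up exactly with the ring definition $r_{j+2}=4r_j$ and the two thresholds $4\gamma$ and $16\gamma$. A secondary subtlety, to be dispatched at the start, is justifying that conditioning on the global event about all of $Y$ reduces, by independence of the site positions, to the single-site condition $t_i\notin D_j$.
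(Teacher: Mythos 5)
Your proof follows the paper's argument essentially step for step: reduce the conditioning to the single-site event $t_i\notin D_j$ by independence of the site locations, express $p_{i,j}$ and $q_{i,j}$ as clipped-area ratios, bound the outer annulus by $16$ times the inner one via the factor-$4$ dilation centered at $\sigma$ (your convexity argument for $U\subseteq\phi(U)$ spells out what the paper leaves implicit), and finish with a case split to trade the denominator for a factor of $2$. The only divergence is the final split: the paper splits on $p_{i,j}\le 1/2$ versus $p_{i,j}>1/2$, using $q_{i,j}\le 16p_{i,j}/(1-p_{i,j})$ and $q_{i,j}\le 1\le 2p_{i,j}$ respectively, which avoids your ``$n$ large enough'' caveat in the case $g(a)>1/2$; since $1-p_{i,j}=(1-g(a))/(1-g(4\gamma))$, your key geometric bound already yields $q_{i,j}\le 16p_{i,j}/(1-p_{i,j})$, so you can adopt that split with no extra work and the proof then holds for all $n$.
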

\begin{proof}
Note in the following we always take as given that $s_i\in Y$. Observe that 
\begin{align*}
p_{i,j} &= \Prob{t_i\in D_j} = \Prob{s_i\in U\cap B(\sigma,r_j/w_i)}
= \frac{area( (U\cap B(\sigma,r_j/w_i)) \setminus B(\sigma,4\gamma))}{area(U\setminus B(\sigma,4\gamma))}.
\end{align*}

Then as the location of the sites in $Y$ are independent, for $q_{i,j}$ we have,  
\begin{align*}
q_{i,j} &= \Prob{(s_i\in U\setminus B(\sigma,16\gamma)) \cap (t_i\in D_{j+2}\setminus D_{j}) \mid \forall s_k\in Y, t_k\notin D_j}\\
 &= \Prob{(s_i\in U\setminus B(\sigma,16\gamma)) \cap (t_i\in D_{j+2}\setminus D_{j}) \mid t_i\notin D_j}\\
 &= \Prob{(s_i\in U\setminus B(\sigma,16\gamma)) \cap (t_i\in D_{j+2}\setminus D_{j})} / \Prob{t_i\notin D_j}\\
 &= \Prob{s_i\in (U\cap B(\sigma,4 r_j/w_i)) \setminus (B(\sigma, r_j/w_i)\cup B(\sigma,16\gamma))} / \Prob{t_i\notin D_j}\\
 &\leq \Prob{s_i\in (U\cap B(\sigma,4 r_j/w_i)) \setminus B(\sigma,16\gamma)} / \Prob{t_i\notin D_j}\\
 &= \frac{area( (U\cap B(\sigma,4 r_j/w_i)) \setminus B(\sigma,16\gamma)) }{ area(U\setminus B(\sigma,4\gamma)) \cdot \Prob{t_i\notin D_j} }
 \leq \frac{16\cdot area( (U\cap B(\sigma, r_j/w_i)) \setminus B(\sigma,4\gamma)) }{ area(U\setminus B(\sigma,4\gamma)) \cdot \Prob{t_i\notin D_j}}\\
 &= \frac{16\cdot p_{ij}}{ \Prob{t_i\notin D_j} }
 = \frac{16\cdot p_{ij}}{ 1-p_{ij} }.
\end{align*}
If $p_{ij}\leq 1/2$, then the above implies $q_{ij}\leq 16\cdot p_{ij}/ (1-p_{ij})\leq 32\cdot p_{ij}$.
On the other hand, if $p_{ij}> 1/2$, then $p_{ij}> 1/2\geq q_{ij}/2$.
\end{proof}

\begin{fact}\factlab{indep}
Fix $Y.$ For any $1 \leq j\leq Z-2$, consider the event, $C_j$, that $j$  is the largest index such that there are no stretched sites from $Y$ located in $D_j$. 
We have $\Prob{C_j} \leq \prod_{s_i\in Y} (1-p_{i,j})$.
\end{fact}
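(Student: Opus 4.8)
The plan is to bound the probability of the event $C_j$ by the probability of a strictly larger, and much simpler, event that it implies: namely the event $E_j$ that \emph{no} stretched site from $Y$ lies in $D_j$, obtained by dropping the requirement that $j$ be the \emph{largest} such index. Since $C_j \subseteq E_j$, this gives $\Prob{C_j} \leq \Prob{E_j}$ for free, and it then suffices to show $\Prob{E_j} = \prod_{s_i\in Y}(1-p_{i,j})$.

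First I would verify the containment $C_j \subseteq E_j$. By definition $C_j$ requires, among other things, that there are no stretched $Y$-sites in $D_j$, which is exactly the event $E_j$; hence $\Prob{C_j} \leq \Prob{E_j}$. (The disks are nested, $D_1 \subseteq D_2 \subseteq \cdots$, so ``no site in $D_j$'' is a monotone event in $j$, which is what makes the passage from $C_j$ to $E_j$ the right relaxation; but for this direction only the trivial set containment is needed, and it is why we obtain an inequality rather than an equality, since $E_j$ can also hold when $D_{j+1}$ happens to be empty as well.)

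Next I would compute $\Prob{E_j}$ via independence. Writing $E_j = \bigcap_{s_i\in Y}\{t_i \notin D_j\}$ and recalling that, having Fixed $Y$ (\remref{fixed}), the locations of the sites in $Y$ are mutually independent, each uniform over $U\setminus B(\sigma,4\gamma)$, with fixed weights, the events $\{t_i\notin D_j\}$ each depend on a single site's location and are therefore independent across $s_i \in Y$. Thus $\Prob{E_j} = \prod_{s_i\in Y}\Prob{t_i\notin D_j} = \prod_{s_i\in Y}(1-p_{i,j})$, directly from the definition of $p_{i,j}$.

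The only real obstacle is cleanly justifying this independence after Fixing $Y$. Conditioning on the set of sites that fall in $Y$ amounts to conditioning, separately for each original site, on whether it lands inside or outside $B(\sigma,4\gamma)$; since the original $n$ locations are independent, this per-site conditioning preserves independence among the $Y$-sites, each becoming uniform on $U\setminus B(\sigma,4\gamma)$. Because each stretched location $t_i$ is a deterministic function of the (now conditioned) location of $s_i$ together with its fixed weight $w_i$, the events $\{t_i\in D_j\}$ remain independent across $s_i\in Y$, which is precisely what the product formula requires.
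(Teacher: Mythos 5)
Your proposal is correct and matches the intended argument: the paper states this as a Fact without an explicit proof, and the justification is exactly what you give — $C_j$ is contained in the event that no stretched site from $Y$ lies in $D_j$, whose probability factors as $\prod_{s_i\in Y}(1-p_{i,j})$ by the independence of the (conditioned) site locations. Your care in noting that fixing $Y$ preserves per-site independence, with each $Y$-site uniform on $U\setminus B(\sigma,4\gamma)$, is precisely the point the paper relies on elsewhere (e.g., in the proof of \lemref{relation}), so nothing is missing.
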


\begin{lemma}\lemlab{max}
Fix $Y.$ Let $\psi$ be the number of sites that fall outside $B(\sigma,16\gamma)$ and contribute to the multiplicative Voronoi diagram within $B(\sigma,\gamma)$.
Then we have 
\[
\Ex{\psi^2}\leq O(1)+ 2\sum _{j=1}^{Z-2}\left( \left(\prod_{s_i \in Y} (1-p_{i,j})\right) \cdot \left(1+3 \sum_{s_i \in Y} q_{i,j} + \left(\sum_{s_i \in Y} q_{i,j}\right)^2 \right) \right).
\]
\end{lemma}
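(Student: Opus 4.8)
The plan is to stratify the probability space by the innermost ring that first captures a stretched site of $Y$, use \lemref{move} to confine all far contributors to two adjacent rings, and then apply \factref{varHelp}. Throughout call a site \emph{far} if it lies outside $B(\sigma,16\gamma)$; since $16\gamma\ge 4\gamma$, every far site is in $Y$, so $\psi$ counts only far sites of $Y$. Let $A_j$ be the event that no stretched site of $Y$ lies in $D_j$, so $\Prob{A_j}=\prod_{s_i\in Y}(1-p_{i,j})$, and recall $C_j\subseteq A_j$ is the event that $j$ is the \emph{largest} empty index (\factref{indep}). Because $D_{Z-1}\supseteq U$ always contains the unmoved weight-$1$ site, the innermost occupied disk has index at most $Z-1$, so the events $C_1,\dots,C_{Z-2}$ together with the boundary event $E_0$ that $D_1$ already contains a stretched site of $Y$ are disjoint and exhaust the space. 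Hence $\Ex{\psi^2}=\Ex{\psi^2[E_0]}+\sum_{j=1}^{Z-2}\Ex{\psi^2[C_j]}$.

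The central step is the reduction $\psi\le N_j$ on $C_j$, where $N_j$ counts the far sites $s_i\in Y$ with $t_i\in R_{j+1}\cup R_{j+2}$. On $C_j$ there is a stretched site $t_i\in R_{j+1}\subseteq D_{j+1}$ with $s_i\in Y$, which is a legal blocker in \lemref{move} (its weight side-condition holds automatically since $s_i\in Y$). Applying \lemref{move} with $k=j+1$, every $s_\ell\in Y$ whose stretched image lies in some $R_{k'+2}$ with $k'\ge j+1$, i.e. in $R_{j+3},R_{j+4},\dots$, is blocked; since $A_j$ forces $t_\ell\notin D_j$ for every $s_\ell\in Y$, the surviving far contributors must have $t_\ell\in R_{j+1}\cup R_{j+2}$, proving $\psi\le N_j$ on $C_j$. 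Consequently $\psi^2[C_j]\le N_j^2[C_j]\le N_j^2[A_j]$ (the last inequality since $[C_j]\le[A_j]$ and $N_j^2\ge0$), so $\Ex{\psi^2[C_j]}\le \Prob{A_j}\cdot\Ex{N_j^2\mid A_j}$.

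It then remains to bound $\Ex{N_j^2\mid A_j}$. The key observation is that $A_j=\bigcap_{s_i\in Y}\{t_i\notin D_j\}$ is a product event over the conditionally independent sites of $Y$, while the indicator $\chi_i$ of ``$s_i$ is far and $t_i\in R_{j+1}\cup R_{j+2}$'' depends only on $s_i$. Thus, conditioned on $A_j$, the $\chi_i$ stay independent with $\Ex{\chi_i\mid A_j}=\Prob{\chi_i=1\mid t_i\notin D_j}=q_{i,j}$, exactly the quantity of \defref{probability} (the collapse of the conditioning event to $\{t_i\notin D_j\}$ is the same manipulation used in \lemref{relation}). Hence $N_j$ is a sum of independent Bernoulli variables under $A_j$, and the extension of \factref{varHelp} to non-identical trials gives $\Ex{N_j^2\mid A_j}\le \sum_{s_i\in Y}q_{i,j}+\big(\sum_{s_i\in Y}q_{i,j}\big)^2$. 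Summed over $j$ against $\Prob{A_j}=\prod_{s_i\in Y}(1-p_{i,j})$, this already yields a bound dominated by the claimed expression; the looser constants in the statement (the $1+3\sum q_{i,j}$ and the leading factor $2$) can be recovered by instead conditioning on $C_j$ directly and peeling off its guaranteed innermost site of $R_{j+1}$ before applying \factref{varHelp} to the remaining trials. For the boundary term, \lemref{move} with $k=1$ confines far contributors to $D_2$; writing $M$ for their number, a homothety (scaling about $\sigma$) argument in the spirit of \lemref{relation} gives $\Ex{M}\le 4\sum_{s_i\in Y}p_{i,1}=4n\pi(16\gamma)^2=O(1)$, whence $\Ex{\psi^2[E_0]}\le\Ex{M^2}\le\Ex{M}+\Ex{M}^2=O(1)$ by \factref{varHelp}.

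The main obstacle is the reduction $\psi\le N_j$: it must use the \emph{precise} event $C_j$ (not merely $A_j$) so that a single innermost stretched site quarantines every remaining far site into exactly the two rings $R_{j+1}\cup R_{j+2}$, all while respecting the weight-dependent hypothesis of \lemref{move}. The second delicate point is justifying that conditioning on $A_j$ preserves both the independence of the $\chi_i$ and their success probability $q_{i,j}$; this is what makes \factref{varHelp} applicable and couples the final bound to the quantities $p_{i,j},q_{i,j}$ of \defref{probability} through \factref{indep} and \lemref{relation}.
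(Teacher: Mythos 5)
Your proof is correct and follows the same architecture as the paper's: stratify by the event $C_j$ that $j$ is the largest index with $D_j$ empty of stretched $Y$-sites, invoke \lemref{move} to confine far contributors to $R_{j+1}\cup R_{j+2}$ on $C_j$, treat the conditional count as a Poisson binomial with parameters $q_{i,j}$, and handle the boundary event (your $E_0$, the paper's $C_0$) via $q_{i,0}\le 4p_{i,1}$ and the definition of $r_1$. The one genuine difference is how you pass from conditioning on $C_j$ to conditioning on $A_j$. The paper writes $\Ex{\psi^2\mathbf{1}_{C_j}}=\Prob{C_j}\cdot\Ex{y_j^2\mid C_j}$, bounds the two factors separately, and therefore needs \clmref{fix} ($\Ex{y_j^2\mid C_j}\le 2\Ex{(y_j+1)^2\mid A_j}$, proved in \apndref{claimproof}) to control the bias introduced by the event $B_j$ that $R_{j+1}$ is occupied — this is where the factor $2$ and the $+1$ in the statement come from. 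You instead keep the product intact and use the pointwise domination $\psi^2\mathbf{1}_{C_j}\le N_j^2\mathbf{1}_{C_j}\le N_j^2\mathbf{1}_{A_j}$, giving $\Ex{\psi^2\mathbf{1}_{C_j}}\le\Prob{A_j}\,\Ex{N_j^2\mid A_j}\le\prod_i(1-p_{i,j})\bigl(\sum_i q_{i,j}+(\sum_i q_{i,j})^2\bigr)$, which is valid, strictly tighter, and eliminates the appendix claim entirely; your closing remark about "recovering the looser constants" is unnecessary since your tighter bound trivially implies the stated one. The only point you (like the paper) gloss over is that the events $E_0, C_1,\dots,C_{Z-2}$ need not quite exhaust the space — all stretched $Y$-sites could lie outside $D_{Z-1}$ — but on that residual event the weight-$1$ site (necessarily in $X$, hence in $D_1$ unstretched) blocks everything and $\psi=0$, so nothing is lost.
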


\begin{proof}
For $1\leq j\leq Z-2$, let $C_j$ be the event that $j$ is the largest index such that $D_j$ contains no stretched sites from $Y$. 
(Note for $i\neq j$, $C_i$ and $C_j$ are disjoint events.)
If $C_j$ occurs then only sites in $R_{j+1} \cup R_{j+2}$ can contribute to the weighted diagram in $B(\sigma,\gamma)$, based on \lemref{move}. 
Let $y_j$ be the random variable denoting the number of sites such that $s_i\notin B(\sigma,16\gamma)$ and $t_i\in R_{j+1}\cup R_{j+2}$.
If $C_j$ occurs, then $y_j$ is an upper bound on the number of sites which fall outside $B(\sigma,16\gamma)$ and contribute to the weighted diagram in $B(\sigma,\gamma)$.  
Note we must also consider the event that $C_j$ does not occur for any $j\geq 1$, namely there exist stretched sites from $Y$ contained in $D_1$.
Call this event $C_0$, and let $y_0$ be the random variable for the number of sites such that $s_i\notin B(\sigma,16\gamma)$ and $t_i\in D_2$.
By the law of total expectation,
\[
\Ex{\psi^2} = \sum _{j=0}^{Z-2} (\Prob{C_j} \cdot \Ex{(y_j)^2\mid C_j}),
\]
where the sum stops at $Z-2$ since there always exists a stretched site with weight $1$ in $D_{Z-1}$ (by definition of $Z$), and so stretched sites outside of $D_Z$ can be ignored.

So consider any $\Ex{(y_j)^2\mid C_j}$ term, for some $j>0$.
Let $A_j$ be the event that $D_j$ contains no stretched sites from $Y$ and let $B_j$ be the event that $R_{j+1}$ has a stretched site from $Y$, and observe that $C_j=A_j\cap B_j$.
The following claim is intuitive, and its proof is in \apndref{claimproof}.
\begin{claim}\claimlab{fix}
 $\Ex{y_j^2 \mid C_j} \leq 2\Ex{(y_j+1)^2\mid A_j}$.
\end{claim}
The events, over all $s_i\in Y$, that $s_i$ falls outside $B(\sigma,16\gamma)$ while $t_i$ is located in $R_{j+1} \cup R_{j+2}$, are independent.  
Moreover, this event for any $s_i\in Y$, when conditioned on $A_j$, has probability $q_{i,j}$ (see \defref{probability}). 
Hence conditioned on $A_j$, the random variable $y_j$ has a Poisson Binomial distribution. 
Thus we have $\Ex{y_{j}\mid A_j}= \sum_{s_i \in Y} (q_{i,j})$, and the variance is then 
$\text{\bf Var}[y_j\!\mid\! A_j] =\sum_{s_i \in Y} \left( q_{i,j}(1-q_{i,j}) \right) \leq \sum_{s_i \in Y} (q_{i,j})=\Ex{y_{j}\!\mid\! A_j}$. Thus using the above claim, 
\begin{align*}
\Ex{(y_{j})^2 \mid C_j} &\leq 2\Ex{(y_j+1)^2\mid A_j} = 2\Ex{1+2y_j+y_j^2 \mid A_j} \\
&= 2(1+2\Ex{y_j\mid A_j}+(\text{\bf Var}[y_j\mid A_j]+\left(\Ex{y_{j}\mid A_j}\right)^2))\\
&\leq 2(1+3\Ex{y_j\mid A_j}+\left(\Ex{y_{j}\mid A_j}\right)^2)
\leq  2\left(1+ 3\sum_{s_i \in Y} q_{i,j} +\left(\sum_{s_i \in Y} q_{i,j}\right)^2\right).
\end{align*}

Suppose that $\Prob{C_0}\cdot \Ex{(y_0)^2|C_0}=O(1)$, then the lemma statement follows.
Specifically, by \factref{indep}, $\Prob{C_j}\leq \prod_{s_i \in Y} (1-p_{i,j})$, for any $1\leq j\leq Z-2$. 
Thus by the above, 
\begin{align*}
\Ex{\psi^2} &= \sum _{j=0}^{Z-2} (\Prob{C_j} \cdot \Ex{(y_j)^2\mid C_j})\\
&\leq O(1)+2\sum _{j=1}^{Z-2} \left( \left( \prod_{s_i \in Y} (1-p_{i,j}) \right) \cdot  \left(1+3 \sum_{s_i \in Y} q_{i,j} +\left(\sum_{s_i \in Y} q_{i,j}\right)^2\right) \right).
\end{align*}
Thus what remains is to show $\Prob{C_0} \cdot \Ex{(y_0)^2|C_0}=O(1)$.
First, observe that 
\begin{align*}
\!\!\!\!\!\!\!\Prob{C_0} \Ex{(y_0)^2|C_0}\!&=\! \Prob{C_0} \sum_{k=1}^{|Y|^2}\! k\cdot \Prob{(y_0)^2=k|C_0}\!=\! \Prob{C_0} \sum_{k=1}^{|Y|^2} \frac{k\cdot \Prob{((y_0)^2=k) \cap C_0}}{\Prob{C_0}}\\ 
&= \sum_{k=1}^{|Y|^2} k\cdot \Prob{((y_0)^2=k) \cap C_0}
\leq \sum_{k=1}^{|Y|^2} k\cdot \Prob{(y_0)^2=k} = \Ex{(y_0)^2}.
\end{align*}%
Thus it suffices to argue $\Ex{(y_0)^2}=O(1)$.

For any $s_i\in Y$, let $q_{i,0}$ denote the probability that $s_i\notin B(\sigma,16\gamma)$ while $t_i\in D_2$, then $\Ex{y_0}=\sum_{s_i\in Y} q_{i,0}$.
Note that as $area( B(\sigma,r_2/w_i) \setminus B(\sigma,16\gamma))  \leq 4 \cdot area( B(\sigma, r_1/w_i) \setminus B(\sigma,4\gamma))$ 
it follows that $area( (U\cap B(\sigma,r_2/w_i)) \setminus B(\sigma,16\gamma))  \leq 4 \cdot area( (U\cap B(\sigma, r_1/w_i)) \setminus B(\sigma,4\gamma))$.
This implies $q_{i,0} \leq 4\cdot p_{i,1}$, and hence $\sum_{s_i\in Y} q_{i,0} \leq 4 \sum_{s_i\in Y} p_{i,1}=4\cdot n\pi(16\gamma)^2=O(1)$, by the definition of $r_1$.
Thus $ \Ex{y_0} =O(1)$, and since $y_0$ has a Poisson Binomial distribution, $ \Ex{(y_0)^2} = (\Ex{y_0})^2+\text{\bf Var}[y_0] \leq (\Ex{y_0})^2+\Ex{y_0}=O(1)$.
\end{proof}
%
%
\begin{lemma}
\lemlab{step}
Let $P_j=\sum_{s_i\in Y} p_{i,j}$. Then for any $j\leq Z-3$, we have $P_j\geq P_{j-1}+1$.
\end{lemma}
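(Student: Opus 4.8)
The plan is to read $P_j-P_{j-1}$ as the expected number of stretched sites of $Y$ that land in the ring $R_j=D_j\setminus D_{j-1}$, and to bound this ring count from below against the fixed count $P_1$ pinned down by the definition of $r_1$. Since each $s_i\in Y$ is uniform on $U\setminus B(\sigma,4\gamma)$, writing $A=\mathrm{area}(U\setminus B(\sigma,4\gamma))$ we have (as in the proof of \lemref{relation}) $p_{i,j}=\mathrm{area}((U\cap B(\sigma,r_j/w_i))\setminus B(\sigma,4\gamma))/A$. Hence $A\,(P_j-P_{j-1})=\sum_{s_i\in Y}\mathrm{area}((U\cap(B(\sigma,r_j/w_i)\setminus B(\sigma,r_{j-1}/w_i)))\setminus B(\sigma,4\gamma))$, while $A\,P_1=\sum_{s_i\in Y}\mathrm{area}((U\cap B(\sigma,r_1/w_i))\setminus B(\sigma,4\gamma))$, where $P_1=n\pi(16\gamma)^2=128\pi$ by the choice of $r_1$. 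It therefore suffices to prove the per-site inequality that the ring term of $s_i$ is at least $1/(128\pi)$ times its $D_1$ term; summing over $Y$ and dividing by $A$ then gives $P_j-P_{j-1}\ge P_1/(128\pi)=1$, which pleasantly explains the constant $16\gamma$ in the definition of $r_1$.

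Write $\tilde r=r/w_i$ for the site-$i$ radii, so the ring has inner radius $\tilde r_{j-1}$ and outer radius $\tilde r_j=2\tilde r_{j-1}$. The $D_1$ term is supported in $B(\sigma,\tilde r_1)$ and, using $\tilde r_1\le\tilde r_{j-1}$ (valid for $j\ge2$), is bounded above by $\pi\tilde r_{j-1}^2$. Thus the per-site claim reduces to showing $\mathrm{area}\big((U\cap(B(\sigma,\tilde r_j)\setminus B(\sigma,\tilde r_{j-1})))\setminus B(\sigma,4\gamma)\big)\ge \tilde r_{j-1}^2/128$. If $\tilde r_{j-1}<4\gamma$ then already $\tilde r_1<4\gamma$, the $D_1$ term vanishes, and there is nothing to prove; so I assume $\tilde r_{j-1}\ge 4\gamma$, in which case the whole ring lies outside $B(\sigma,4\gamma)$ and I only need a lower bound on $\mathrm{area}(U\cap(B(\sigma,\tilde r_j)\setminus B(\sigma,\tilde r_{j-1})))$.

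The geometric heart is to show $U$ retains a constant-angle sector of this ring. Let $c^\ast$ be the corner of $U$ farthest from $\sigma$ and $R_{\max}=\|c^\ast-\sigma\|$. The hypothesis $j\le Z-3$ is exactly what bounds the radius: since $Z-1$ is the least index with $U\subseteq D_{Z-1}$, some point of $U$ lies outside $D_{Z-2}$, so $r_{Z-2}<R_{\max}$, whence $r_{j+1}\le r_{Z-2}<R_{\max}$ and $\tilde r_j=r_j/w_i\le r_j<R_{\max}/2$ (the weight only helps, as $w_i\ge1$). Moreover $c^\ast$ being farthest forces $\sigma$ into the opposite quarter of $U$, so $\sigma$ is at distance at least $1/2$ from each of the two edges of $U$ incident to $c^\ast$; placing $\sigma$ at the origin and these edges at $x=u$, $y=v$ with $u,v\ge 1/2$, the reach of $U$ in direction $\theta$ is $\ell(\theta)=\min(u/\cos\theta,\,v/\sin\theta)$, and a direct computation shows $\ell(\theta)\ge R_{\max}/2>\tilde r_j$ for all $\theta$ in an interval of constant length $\delta_0$ about the direction $\sigma\to c^\ast$. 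Integrating the radial area element over this interval gives $\mathrm{area}(U\cap(B(\sigma,\tilde r_j)\setminus B(\sigma,\tilde r_{j-1})))\ge \tfrac12(\tilde r_j^2-\tilde r_{j-1}^2)\,\delta_0=\tfrac32\delta_0\,\tilde r_{j-1}^2$, which (with ample slack, as any constant $\delta_0\ge 1/192$ suffices) exceeds $\tilde r_{j-1}^2/128$. This establishes the per-site inequality and hence the lemma.

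The main obstacle is precisely the sector bound of the last paragraph: the retained fraction of the ring after clipping by $\partial U$ must be bounded below by a constant uniformly over the position of $\sigma$ (and, more easily, over $w_i$). The naive identity ``ring $=3\times$ inner disk'' fails under clipping, so the argument genuinely needs the farthest-corner input delivered by $j\le Z-3$ to certify that the ring has not yet been cut off by $U$; the remaining annoyances, namely the $B(\sigma,4\gamma)$ excision and the boundary index $j=1$, are absorbed by the degenerate-case observations above.
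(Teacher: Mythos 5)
Your proof is correct, but it takes a genuinely different route from the paper's. The paper proves the stronger multiplicative statement $P_j\geq 2P_{j-1}$ (which gives $P_j\geq P_{j-1}+1$ since $P_1>2$): it cuts $D_j$ into four quarters by axis-parallel lines through $\sigma$, argues that for $j\leq Z-3$ one quarter of $D_j$ lies entirely in $U$, and deduces $\mathrm{area}(U\cap B(\sigma,r_j/w_i))\geq 2\,\mathrm{area}(U\cap B(\sigma,r_{j-1}/w_i))$ per site. You instead bound the additive increment $P_j-P_{j-1}$ directly, reading it as the expected number of stretched sites landing in the ring and comparing, per site, the clipped ring area against the clipped $D_1$ area (bounded above by $\pi\tilde r_{j-1}^2$), so that the normalization $P_1=n\pi(16\gamma)^2=128\pi$ delivers exactly the $+1$. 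The geometric payload is also different: rather than a full quarter-disk inside $U$, you only need a constant-angle sector of the ring to survive clipping, which you certify via the farthest corner $c^*$, the bound $\tilde r_j<R_{\max}/2$ coming from $j\leq Z-3$, and the reach function $\ell(\theta)=\min(u/\cos\theta,v/\sin\theta)$ with $u,v\geq 1/2$; I checked that this yields $\delta_0\geq\arcsin(2/\sqrt{5})>1$, far above the $1/192$ you need, so the omitted ``direct computation'' is fine. Your sector argument is in fact more robust than the paper's quartering step (a full quarter of $D_j$ need not fit inside $U$ in all edge cases, e.g.\ when $\sigma$ lies on an edge of $U$ and $r_{Z-3}$ slightly exceeds $1/2$, whereas a constant-angle sector always does), at the cost of losing the clean doubling $P_j\geq 2P_{j-1}$ -- which is not needed downstream, since \lemref{constant} only uses the unit increments. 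The only loose ends are cosmetic: the lemma is only invoked for $j\geq 2$ (where $\tilde r_1\leq\tilde r_{j-1}$ holds and $P_{j-1}$ is defined), and it would be worth writing out the two-line verification of the sector bound.
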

\newcommand{\quarterproof}{
\begin{proof}
Note that by definition, the expected number of stretched sites from $Y$ that are contained in $B(\sigma,r_1)$ is $n\pi(16\gamma)^2>2$, and thus $P_1=(\sum_{s_i\in Y} p_{i,1})> 2$.

\begin{figure}[t]
  \centering
    \includegraphics[width=.34\linewidth]{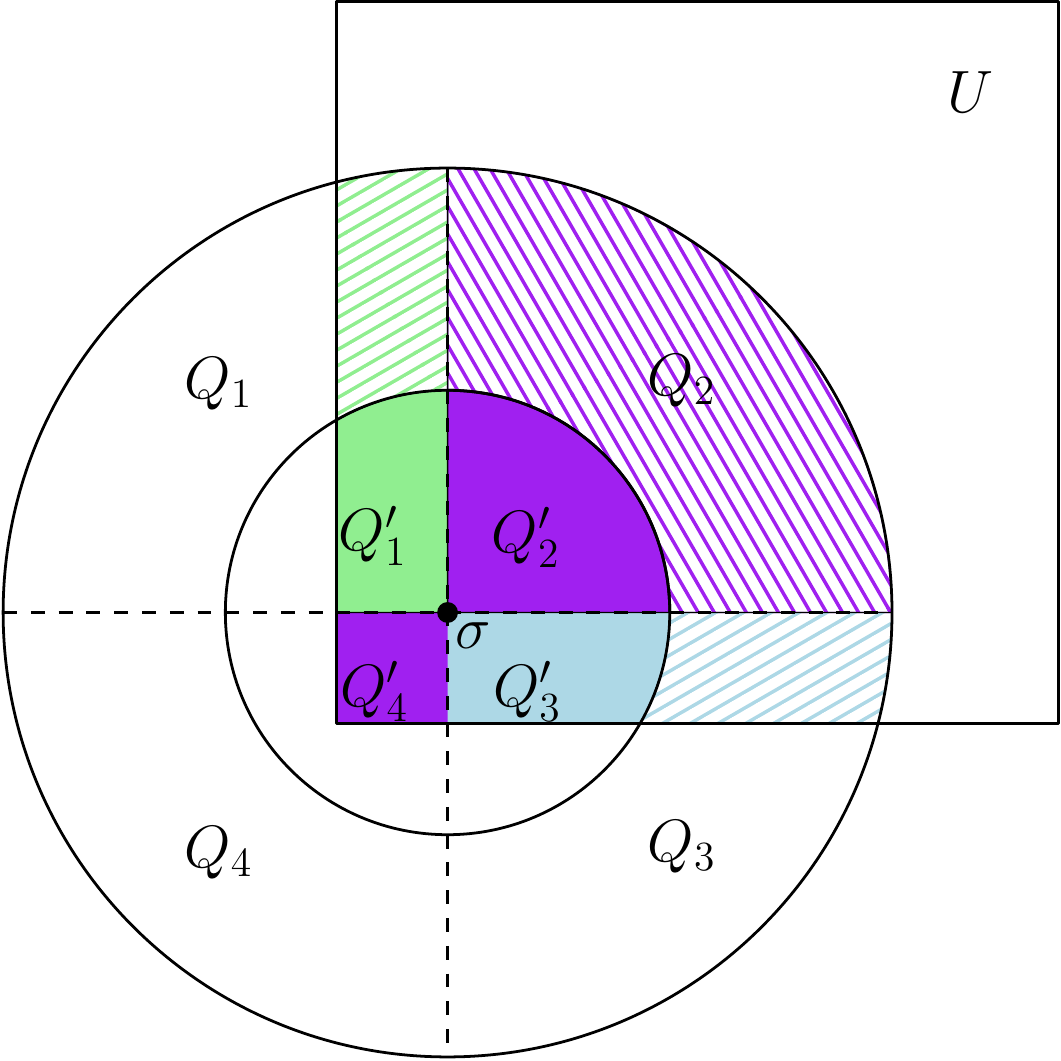}
    \caption{The locations of the four defined quarters for each disk, and how they intersect $U$.}
    \figlab{cut}
\end{figure}

We argue that for any $j\leq Z-3$, that $area(U\cap D_j) \geq 2\cdot area(U\cap D_{j-1})$.
To this end, break $D_j$ into 4 quarters, by cutting it with a vertical and horizontal line through $\sigma$.
Let the quarters be labeled $Q_1$, $Q_2$, $Q_3$, and $Q_4$, in clockwise order, starting with the northwest quarter, see \figref{cut}. 
Similarly break $D_{j-1}$ into quarters, $Q'_1$, $Q'_2$, $Q'_3$, and $Q'_4$, with the same clockwise labeling order.
Recall that by definition $Z-1$ is the smallest $j$ such that $U\subseteq D_j$.
Since $\sigma\in U$, this implies that for any $j\leq Z-3$
at least one of the quarters of $D_j$ is fully contained in $U$, and without loss of generality assume it is $Q_2$.
Note that $area(D_j)=4 area(D_{j-1})$, thus this implies that $area(U\cap Q_2) = area(Q_2) = 2(area(Q'_2)+area(Q'_4))\geq 2(area(U\cap Q'_2)+area(U\cap Q'_4))$.
It is easy to argue that since $r_j=2r_{j-1}$ that $area(U\cap Q_1)\geq 2 area(U\cap Q'_1)$ and $area(U\cap Q_3)\geq 2 area(U\cap Q'_3)$, 
thus summing over all quarters $area(U\cap D_j) \geq 2\cdot area(U\cap D_{j-1})$.

This implies that $area(U\cap B(\sigma,r_j/w_i)) \geq 2\cdot area(U\cap B(\sigma,r_{j-1}/w_i))$, 
which in turn implies $area((U\cap B(\sigma,r_j/w_i)) \setminus B(\sigma,4\gamma)) \geq 2 \cdot area( (U\cap B(\sigma,r_{j-1}/w_i)) \setminus B(\sigma,4\gamma))$.
Hence by \defref{probability}, $p_{i,j} \geq 2p_{i,j-1}$, and therefore $P_j \geq 2 P_{j-1}\geq P_{j-1}+1 $.
\end{proof}
}
\quarterproof

\begin{lemma}\lemlab{constant}
\[
 \sum _{j=1}^{Z-2} \left( \left( \prod_{s_i \in Y} (1-p_{i,j}) \right) \cdot  \left(1+3 \sum_{s_i \in Y} q_{i,j} +\left(\sum_{s_i \in Y} q_{i,j}\right)^2\right) \right) = O(1).
\]
\end{lemma}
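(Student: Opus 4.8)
The plan is to collapse the sum into a convergent series by pitting the exponential decay of the product factor against the merely polynomial growth of the $q$-sum. Write $P_j=\sum_{s_i\in Y}p_{i,j}$ and $Q_j=\sum_{s_i\in Y}q_{i,j}$, so that the quantity to bound is $\sum_{j=1}^{Z-2}\big(\prod_{s_i\in Y}(1-p_{i,j})\big)\,(1+3Q_j+Q_j^2)$. The goal is to show each summand is $O(j^2 e^{-j})$, after which summability is immediate.

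First I would control the two factors separately. For the product, the standard inequality $1-x\le e^{-x}$ gives $\prod_{s_i\in Y}(1-p_{i,j})\le e^{-P_j}$. For the quadratic factor, \lemref{relation} supplies $q_{i,j}\le 32\,p_{i,j}$ for every $s_i\in Y$, and summing over $Y$ yields $Q_j\le 32\,P_j$; provided $P_j$ is bounded away from $0$ (which I verify below), this gives $1+3Q_j+Q_j^2=O(P_j^2)$. Multiplying the two bounds, each summand is $O\big(P_j^2\, e^{-P_j}\big)$.

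Next I would use the growth of $P_j$ to render this summable. Note first that $P_j$ is nondecreasing in $j$, since $D_j\subseteq D_{j+1}$ forces $p_{i,j}\le p_{i,j+1}$. Combining this monotonicity with \lemref{step} (which gives $P_j\ge P_{j-1}+1$ for $j\le Z-3$) and the base estimate $P_1>2$ from its proof, one obtains $P_j\ge j$ for all $1\le j\le Z-2$, the endpoint $j=Z-2$ being covered by monotonicity from $P_{Z-3}$ where \lemref{step}'s recurrence stops applying. In particular $P_j\ge 2$ for $j\ge 1$, which justifies the uniform bound $1+3Q_j+Q_j^2=O(P_j^2)$ used above. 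Since $x^2 e^{-x}$ is decreasing for $x\ge 2$, I may replace $P_j$ by its lower bound to get $P_j^2 e^{-P_j}\le j^2 e^{-j}$ for $j\ge 2$ (with the single $j=1$ term contributing only $O(1)$). Hence the whole sum is at most $O(1)+\sum_{j\ge 2}O(j^2 e^{-j})=O(1)$, because $\sum_k k^2 e^{-k}$ converges.

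The delicate points are purely bookkeeping rather than genuine obstacles: propagating the lower bound $P_j\ge j$ to the endpoint $j=Z-2$ via monotonicity, and confirming that $P_j$ stays bounded away from $0$ so that $1+3Q_j+Q_j^2=O(P_j^2)$ holds uniformly. The conceptual heart is simply that \lemref{step} makes $P_j$ grow at least linearly in $j$, so the factor $e^{-P_j}$ decays exponentially while the factor $(1+3Q_j+Q_j^2)$ grows only polynomially in $P_j$ (by \lemref{relation}); their product is therefore dominated term-by-term by a convergent series.
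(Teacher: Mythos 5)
Your proposal is correct and follows essentially the same route as the paper: both bound the product by $e^{-P_j}$ via $1-x\le e^{-x}$, convert $q$ to $p$ via \lemref{relation}, and use \lemref{step} together with $P_1>2$ and the monotone decay of $x^\alpha e^{-x}$ for $x\ge 2$ to sum a geometrically dominated series. The only cosmetic difference is that you merge the three powers into a single $O(P_j^2)$ bound and compare against $\sum_j j^2e^{-j}$, whereas the paper treats $\alpha=0,1,2$ separately and compares against $\sum_{x\ge 2}x^\alpha e^{-x}$.
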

\begin{proof}
We argue that $\sum _{j=1}^{Z-2} \left( \left( \prod_{s_i \in Y} (1-p_{i,j}) \right) \cdot \left(\sum_{s_i \in Y} q_{i,j}\right)^\alpha \right)= O(1)$, where $\alpha =0$, $1$, or $2$, thus implying the lemma statement. 
By \lemref{relation}, $q_{i,j}\leq 32\cdot p_{i,j}$, and therefore
\begin{align*}
\sum _{j=1}^{Z-2}\left( \left(\prod_{s_i \in Y} (1-p_{i,j})\right) \cdot \left(\sum_{s_i \in Y} q_{i,j}\right)^\alpha \right)  
\leq \sum _{j=1}^{Z-2} \left(\left(\prod_{s_i \in Y} (1-p_{i,j})\right) \cdot \left(32 \sum_{s_i \in Y} p_{i,j}\right)^\alpha \right) \\
\leq 32^\alpha \sum _{j=1}^{Z-2} \left(\left(e^{-\sum_{s_i \in Y} p_{i,j}}\right) \cdot \left(\sum_{s_i \in Y} p_{i,j}\right)^\alpha \right)
= 32^\alpha \sum _{j=1}^{Z-2} \left(\left(e^{-P_j}\right) \cdot \left(P_j\right)^\alpha \right),
\end{align*}
where the last inequality follows as $1-x\leq e^{(-x)}$, and the last equality is by the definition of $P_j$ from the \lemref{step} statement.

Note that by definition, the expected number of stretched sites from $Y$ that are contained in $B(\sigma,r_1)$ is $n\pi(16\gamma)^2$, and thus $P_1=(\sum_{s_i\in Y} p_{i,1})> 2$.
Moreover, the function $x^\alpha e^{-x}$ is monotonically decreasing for $x>2$, and always has value less than $1$, for $\alpha = 0,1,\text{ or }2$.
Thus since by \lemref{step}, $P_j\geq P_{j-1}+1$ for $j\leq Z-3$, and since $P_{Z-2}\geq P_{Z-3}$, we have, 
\begin{align*}
\sum _{j=1}^{Z-2} P_j^\alpha \cdot e^{-P_j}
\leq 1+ \sum _{j=1}^{Z-3} P_j^\alpha \cdot e^{-P_j}
\leq 1+\sum _{x=2}^{\infty} x^\alpha \cdot e^{-x} \leq 2+\int_{2}^{\infty} x^\alpha\cdot e^{-x} dx \leq 4,
\end{align*}
for $\alpha =0, 1,\text{ or }2$. Combining the above two equalities thus yields the lemma statement.
\end{proof}

Now that we have the above lemmas for any fixed $Y$, we are finally ready to prove our main lemma (where $Y$ is no longer assumed to be fixed).

\begin{lemma}\lemlab{alltogether}
Let $S$ be a set of $n$ point sites in the plane, with arbitrary positive weights.
Suppose that the location of each site in $S$ is sampled uniformly at random from the unit square $U$. 
Then for any point $\sigma\in U$, $\Ex{|\WVorX{S}\cap B(\sigma,\gamma)|}=O(1)$.
\end{lemma}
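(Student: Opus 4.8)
The plan is to bound the complexity of $\WVorX{S}\cap B(\sigma,\gamma)$ by the square of the number of sites whose weighted cells reach into $B(\sigma,\gamma)$, and to split these \emph{contributing} sites into a near group and a far group according to whether they lie inside $B(\sigma,16\gamma)$. Let $N$ denote the total number of sites of $S$ falling in $B(\sigma,16\gamma)$, and recall that $\psi$ (from \lemref{max}) counts the sites lying \emph{outside} $B(\sigma,16\gamma)$ which contribute to the diagram in $B(\sigma,\gamma)$. Every contributing site is counted by either $N$ or $\psi$, so the number of contributing sites is at most $N+\psi$. Since the multiplicative Voronoi diagram of $k$ sites has worst-case complexity $O(k^2)$, and clipping to the convex ball $B(\sigma,\gamma)$ adds only lower-order boundary terms, we obtain $|\WVorX{S}\cap B(\sigma,\gamma)| = O\pth{(N+\psi)^2} = O(N^2+\psi^2)$, using $(a+b)^2\le 2a^2+2b^2$. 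Taking expectations, it therefore suffices to prove $\Ex{N^2}=O(1)$ and $\Ex{\psi^2}=O(1)$ separately; note that because of this inequality no independence between $N$ and $\psi$ is needed.

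The term $\Ex{N^2}$ is elementary. Each of the $n$ sites lands in $B(\sigma,16\gamma)\cap U$ independently with probability $\alpha\le \pi(16\gamma)^2 = 128\pi/n$ (using $\gamma^2=1/(2n)$), so $N$ is a binomial random variable $Bin(\alpha,n)$ with mean $\alpha n\le 128\pi = O(1)$. By \factref{varHelp}, $\Ex{N^2}\le \alpha n + (\alpha n)^2 = O(1)$.

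The term $\Ex{\psi^2}=O(1)$ is where the earlier lemmas pay off, and I expect this to be the only genuinely delicate point, although in fact the hard analysis has already been carried out. For any fixed realization of $Y$ (in the sense of \remref{fixed}), combining \lemref{max} with \lemref{constant} gives $\Ex{\psi^2\mid Y}=O(1)$, and this bound is uniform over all choices of $Y$. Hence by the law of total expectation $\Ex{\psi^2}=\Ex{\Ex{\psi^2\mid Y}}=O(1)$.

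Combining the two estimates yields $\Ex{|\WVorX{S}\cap B(\sigma,\gamma)|}=O\pth{\Ex{N^2}+\Ex{\psi^2}}=O(1)$, as claimed. In summary, once the far-site second moment $\Ex{\psi^2}$ has been controlled through \lemref{max} up to \lemref{constant}, the present lemma reduces to an assembly step: the near-site contribution is handled by a one-line binomial second-moment estimate, and the two groups are merged via the crude but sufficient bound $(N+\psi)^2\le 2N^2+2\psi^2$, which conveniently sidesteps any need to analyze the correlation between the near and far counts.
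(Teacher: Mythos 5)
Your proposal is correct and follows essentially the same route as the paper: the paper also splits contributors into those inside and outside $B(\sigma,16\gamma)$, bounds the complexity by the square of the sum via $(a+b)^2\le 2a^2+2b^2$, handles the near group with a binomial second-moment estimate through \factref{varHelp}, and invokes \lemref{max} together with \lemref{constant} (uniformly over the realization of $Y$) for $\Ex{\psi^2}=O(1)$. The only cosmetic difference is that you bound the near contributors by the total count $N$ of sites in $B(\sigma,16\gamma)$ directly, whereas the paper names the contributing subset $\hat{\psi}$ first and then bounds it by the same $N$.
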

\begin{proof}
Let $\hat{\Psi} = S\cap B(\sigma, 16\gamma)$ be the sites which fall in $B(\sigma, 16\gamma)$ and let $\Psi = S\setminus \hat{\Psi}$ be the complement set.
Let $\hat{\psi}$, $\psi$,  be the random variables denoting the number of sites respectively from $\hat{\Psi}$, $\Psi$,  which contribute to the multiplicative diagram in $B(\sigma,\gamma)$.

Recall that the worst-case complexity of the multiplicative diagram is quadratic in the number of sites. Thus it suffices to bound
$\Ex{(\hat{\psi}+\psi)^2} \leq \Ex{2(\hat{\psi}^2+\psi^2)} = 2(\Ex{\hat{\psi}^2}+\Ex{\psi^2}).$
Thus we now show each of the above two expected value terms are constant.  
First, note that \lemref{max} and \lemref{constant} combined imply that $\Ex{\psi^2}=O(1)$
(as those lemmas hold regardless of which sites fall in $Y$).
Thus we only need to bound $\Ex{\hat{\psi}^2}$. Observe that clearly $|\hat{\Psi}|\geq \hat{\psi}$. 
To bound $|\hat{\Psi}|$, observe that $area(B(\sigma, 16\gamma))=O(1/n)$, and thus $\Ex{|\hat{\Psi}|} = O(1)$. 
Moreover, the number of sites which fall into this ball is a binomial random variable.  
Thus by \factref{varHelp}, $\Ex{\hat{\psi}^2} \leq \Ex{|\hat{\Psi}|^2} \leq \Ex{|\hat{\Psi}|}+\Ex{|\hat{\Psi}|}^2 = O(1)$.
\end{proof}

Consider placing a uniform grid with side length $1/\sqrt{n}$ over the unit square $U$.  Then for any grid cell, if we set $\sigma$ to be the center of the grid cell then $B(\sigma,\gamma)$ contains the grid cell, 
and hence the above lemma implies the expected complexity in the grid cell is constant.  Thus using linearity of expectation over all $n$ grid cells implies the following main theorem.

\begin{theorem}
Let $S$ be a set of $n$ point sites in the plane, with arbitrary positive weights.
Suppose the location of each site in $S$ is sampled uniformly at random from the unit square $U$. 
Then the expected complexity of the multiplicative Voronoi diagram of $S$ within $U$ is $O(n)$.
\end{theorem}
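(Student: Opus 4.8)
The plan is to reduce the global bound to the local bound already established in \lemref{alltogether} via a simple covering argument. First I would overlay a uniform axis-parallel grid on the unit square $U$ with cells of side length $1/\sqrt{n}$, yielding $n$ cells in total. The key geometric observation is that $\gamma=\sqrt{1/(2n)}$ is precisely half the length of the diagonal of such a cell: the diagonal has length $\sqrt{2}/\sqrt{n}$, so its half is $1/\sqrt{2n}=\gamma$. Consequently, if $\sigma$ denotes the center of a grid cell, the closed disk $B(\sigma,\gamma)$ circumscribes the cell and in particular contains it entirely.

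Next I would bound the complexity of $\WVorX{S}$ restricted to $U$ by the sum, over all $n$ cells, of the complexity restricted to each cell. Since each cell is contained in the corresponding disk $B(\sigma,\gamma)$, the complexity within a cell is at most the complexity within $B(\sigma,\gamma)$. Applying \lemref{alltogether}, which asserts $\Ex{|\WVorX{S}\cap B(\sigma,\gamma)|}=O(1)$ for every fixed $\sigma\in U$, and then summing by linearity of expectation over the $n$ cells, gives the desired $O(n)$ expected bound. Note that \lemref{alltogether} holds for an \emph{arbitrary} fixed $\sigma$, so it applies uniformly to all $n$ cell centers at once without any additional conditioning.

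The main obstacle is not the calculation but making the decomposition precise: the complexity of a planar arrangement within $U$ is not literally the sum of the complexities within the cells, because features lying on shared cell boundaries, together with the clipping of the diagram against each cell, can be counted more than once. I would handle this by charging each vertex, edge, and face of the diagram to a cell it meets (and to the $O(\sqrt{n})$ grid segments it crosses), observing that every feature is charged only a constant number of times, so the overcounting and the extra complexity introduced by intersecting the diagram with the grid lines together contribute only a constant factor plus an additive $O(\sqrt{n})=O(n)$ term. With this accounting in place, the per-cell bound from \lemref{alltogether} transfers directly, and the theorem follows.
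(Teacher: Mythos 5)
Your proposal is correct and follows essentially the same route as the paper: overlay a $1/\sqrt{n}$ grid, observe that $\gamma=\sqrt{1/(2n)}$ is the half-diagonal so $B(\sigma,\gamma)$ centered at a cell's center contains the cell, apply \lemref{alltogether} to each cell, and sum by linearity of expectation. Your extra paragraph on the bookkeeping for features shared between cells or clipped by grid lines is a detail the paper leaves implicit, and your handling of it is fine.
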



\bibliographystyle{plain}
\bibliography{voronoi}%

\myparagraph{Acknowledgements.} 
The authors would like to thank Sariel Har-Peled for useful discussions concerning the case of sampled site locations. 


\newpage
\appendix

\newenvironment{reflemma}[1]{\medskip\parindent 0pt{\bf Lemma \ref{#1}.}\em }{\vspace{1em}}
\newenvironment{reftheorem}[1]{\medskip\parindent 0pt{\bf Theorem \ref{#1}.}\em }{\vspace{1em}}
\newenvironment{refclaim}[1]{\medskip\parindent 0pt{\bf Claim \ref{#1}.}\em }{\vspace{1em}}

\section{Claim Proof}\apndlab{claimproof}
Here we prove the claim from \lemref{max}. Below is the restated claim.\\

{\noindent
\textbf{\claimref{fix}.}
\textit{
$\Ex{y_j^2 \mid C_j} \leq 2\Ex{(y_j+1)^2\mid A_j}$.
}

\begin{proof}
Let $y_j$ and $C_j=A_j\cap B_j$ be as defined in the proof of \lemref{max}.
Throughout $j$ is fixed and so we drop the $j$ subscripts.
Thus we must prove
 $\Ex{y^2 | A \cap B} \leq 2\Ex{(y+1)^2 | A}$.

As the conditioning on $A$ appears in all terms, for simplicity of exposition we write that we want to show $\Ex{y^2 | B} \leq 2\Ex{(y+1)^2}$ where the conditioning on $A$ is implicit.


Note that $y=y'+y''$, where $y'$ is the number of stretched sites falling in $R_{j+1}$ and $y''$ the number falling in $R_{j+2}$. Moreover, $(y'+y'')^2 \leq 2((y')^2+(y'')^2)$.

\begin{lemma}
    \lemlab{aa}%
    $\Ex{(y'')^2 \mid y' \neq 0} \leq \Ex{(y'')^2}$.
\end{lemma}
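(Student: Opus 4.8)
The plan is to exhibit the negative association between $y'$ and $y''$ that comes from the rings $R_{j+1}$ and $R_{j+2}$ being disjoint: after stretching, each site of $Y$ can land in at most one of the two rings, so a site counted by $y'$ is a site \emph{not} available to $y''$. Throughout, the conditioning on the event $A = A_j$ (that $D_j$ contains no stretched site of $Y$) is implicit, as in the claim. The first thing I would record is that independence survives this conditioning: by \remref{fixed} the positions of the sites in $Y$ are mutually independent, and since $A = \bigcap_{s_i \in Y}\{\Stretch_i \notin D_j\}$ is an intersection of events each depending on a single site, the sites remain mutually independent after conditioning on $A$. Writing $u_i$ for the indicator that $s_i$ contributes to $y'$ (so $\Stretch_i \in R_{j+1}$) and $v_i$ for the indicator that $s_i$ contributes to $y''$ (so $\Stretch_i \in R_{j+2}$), I have $y' = \sum_i u_i$, $y'' = \sum_i v_i$, and crucially $u_i v_i = 0$ for every $i$ because $R_{j+1} \cap R_{j+2} = \emptyset$.

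Rather than attack $\Ex{(y'')^2 \mid y' \neq 0}$ head on, I would pass to the complementary conditioning. Splitting $\Ex{(y'')^2}$ by the law of total expectation over $\{y'=0\}$ and $\{y'\neq 0\}$, the desired inequality $\Ex{(y'')^2 \mid y' \neq 0} \leq \Ex{(y'')^2}$ is equivalent to $\Ex{(y'')^2 \mid y' = 0} \geq \Ex{(y'')^2}$ (assuming $\Prob{y'\neq 0}>0$, the only case in which the conditional expectation is defined). So it suffices to show that forbidding all sites from $R_{j+1}$ can only \emph{increase} the second moment of $y''$, which is intuitively clear since it frees up sites to land in $R_{j+2}$.

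To prove $\Ex{(y'')^2 \mid y'=0} \geq \Ex{(y'')^2}$ I would use independence twice. First, $\{y'=0\} = \bigcap_i \{u_i = 0\}$ again factorizes over sites, so conditioning additionally on $y'=0$ keeps the $v_i$ mutually independent, each with conditional success probability $\tilde r_i := \Prob{v_i = 1 \mid u_i = 0}$. Since $v_i = 1$ forces $u_i = 0$ by mutual exclusivity, I get $\tilde r_i = \Prob{v_i=1}/\Prob{u_i=0} \geq \Prob{v_i=1} =: r_i$. Second, for independent indicators the second moment $\Ex{\pth{\sum_i v_i}^2} = \sum_i r_i + \sum_{i \neq k} r_i r_k$ is a manifestly nondecreasing function of each individual success probability; replacing each $r_i$ by the larger $\tilde r_i$ therefore only increases it, which gives exactly $\Ex{(y'')^2 \mid y'=0} \geq \Ex{(y'')^2}$ and hence the lemma.

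The only genuinely delicate point will be the conditioning bookkeeping: I must check that each event conditioned on (first $A$, then additionally $y'=0$) is a product of single-site events, so that independence of the $v_i$ is preserved at each stage, and I must dispose of the degenerate case $\Prob{y'\neq 0}=0$ separately, where the claimed conditional expectation is vacuous. Once independence is secured, the two monotonicity observations are immediate, so I expect the actual write-up to be short.
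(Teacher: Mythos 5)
Your proposal is correct and follows essentially the same route as the paper: both reduce the claim, via the law of total expectation over $\{y'=0\}$ and $\{y'\neq 0\}$, to showing $\Ex{(y'')^2 \mid y'=0} \geq \Ex{(y'')^2}$. The only difference is that the paper dismisses that last inequality as ``easy to verify,'' whereas you supply the verification (independence is preserved under the per-site conditioning, $\tilde r_i \geq r_i$ by mutual exclusivity of the rings, and monotonicity of the second moment in the success probabilities), which is a welcome but not structurally different addition.
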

\begin{proof}
    Let $\alpha = \Ex{(y'')^2 \mid y' = 0}$ and
    $\beta = \Ex{(y'')^2 \mid y' \neq 0}$.  
    It is easy to verify that
    $\alpha = \Ex{(y'')^2 \mid y' = 0} \allowbreak \geq \Ex{(y'')^2}$.        
    Now, observe that
    \begin{align*}
      \mu
      &= \Ex{(y'')^2}%
        =%
        \Ex{(y'')^2 \mid y' = 0}
        \Prob{y' = 0}%
          +
        \Ex{y''^2 \mid y' \neq 0} \Prob{y' \neq 0}
      \\&%
      =%
      \alpha 
      \Prob{y' = 0}%
      + \beta 
      (1 - \Prob{y' = 0}).%
    \end{align*}
    Namely, $\mu$ is a convex combination of $\alpha$ and $\beta$,
    and since $\alpha\geq \mu$, it must be that $\beta\leq \mu$, as claimed.
\end{proof}

As $B=(y'\neq 0)$, by the above lemma and linearity of expectation we have 
\begin{align*}
 \Ex{y^2\mid B} \leq 2\Ex{(y')^2 \mid y'\neq 0}+2\Ex{(y'')^2\mid y'\neq 0}
 \leq 2\Ex{(y')^2 \mid y'\neq 0} + 2\Ex{(y'')^2}.
\end{align*}
Observe that $2\Ex{(y'+1)^2}+2\Ex{(y'')^2}\leq  2\Ex{(y+1)^2}$. Thus if we can prove $\Ex{(y')^2\mid y'\neq 0}\allowbreak \leq \Ex{(y'+1)^2}$, then the above implies $\Ex{y^2\mid B}\leq 2\Ex{(y+1)^2}$ as claimed.


So to prove $\Ex{(y')^2\mid y'\neq 0}\leq \Ex{(y'+1)^2}$, note that $(y'\neq 0)=\cup_i X_i$, where $X_i$ is the event that the $i$th stretched site is in $R_{j+1}$. 
Thus,  
\begin{align*}
 E[(y')^2 \mid y'\neq 0] 
 =& E[(y')^2\mid X_1]\cdot Pr[X_1]+E[(y')^2\mid \overline{X_1} \cap X_2]\cdot Pr[\overline{X_1}\cap X_2]+\ldots\\
 &+E[y'^2 \mid \overline{X_1}\cap\ldots \cap\overline{X_{n-1}}\cap X_n]\cdot Pr[\overline{X_1}\cap\ldots \cap\overline{X_{n-1}}\cap X_n]\\
 &+ E[(y')^2 \mid \overline{X_1}\cap\ldots \cap\overline{X_{n}}]\cdot Pr[\overline{X_1}\cap\ldots \cap\overline{X_{n}}].
\end{align*}
Note the last term above is zero and can be ignored.  Also note that
\[
 Pr[X_1]+Pr[\overline{X_1}\cap X_2]+\ldots +Pr[\overline{X_1}\cap\ldots \cap\overline{X_{n-1}}\cap X_n]+Pr[\overline{X_1}\cap\ldots \cap\overline{X_{n}}]=1.
\]
Thus the claim follows if we can argue that each expectation in the above sum is upper bounded by $E[(y'+1)^2]$.
So consider any term $E[(y')^2 \mid \overline{X_1}\cap\ldots \cap\overline{X_{k-1}}\cap X_k]$.
Let $z$ be the number of sites from $\{p_{k+1}\ldots p_n\}$ falling in $R_{j+1}$. Then since the points were sampled independently we have 
 \begin{align*}
 E[(y')^2 \mid \overline{X_1}\cap\ldots \cap\overline{X_{k-1}}\cap X_k] 
 &\leq E[(z+1)^2\mid \overline{X_1}\cap\ldots \cap\overline{X_{k-1}}\cap X_k]\\
 &\leq E[(z +1)^2] \leq E[(y'+1)^2].
 \end{align*}
\end{proof}

\section{Complexity Sketch}\apndlab{shallow}
Here we give a very brief description of why the order-$k$ sequence Voronoi diagram has $O(nk^3)$ worst-case complexity.
First, recall that by standard lifting, the regular order-$k$ diagram is described by the exact $k$th level in the arrangement of hyperplanes tangent to the unit paraboloid. Since we care about the ordering of the $k$ sites, we are instead concerned with the at most $k$ level. There is a shallow cutting covering the at most $k$ level with $O(n/k)$ vertical prisms each intersecting $O(k)$ planes~\cite{hks-akl-16}. Each prism projects to a triangle in the plane, and thus within this triangle only $O(k)$ sites (corresponding to the planes intersecting the prism) are relevant. Note that $O(k)$ sites can define at most $O(k^4)$ different orderings as they define $O(k^2)$ bisectors and the the arrangment of these bisectors has $O(k^4)$ complexity. Thus the plane is covered by $O(n/k)$ triangles within which the order-$k$ sequence Voronoi diagram has $O(k^4)$ complexity, and thus in total the complexity is $O(nk^3)$.

We remark that for the purposes of this paper an $O(nk^5)$ bound would have sufficed, and is trivial to obtain.  Namely, the worst case complexity of the regular order-$k$ diagram is $O(nk)$, and by the same argument, within each cell there can be at most $O(k^4)$ orderings.

\end{document}